\newcommand\vldbdoi{10.14778/3705829.3705859}
\newcommand\vldbpages{466 - 474}
\newcommand\vldbvolume{18}
\newcommand\vldbissue{2}
\newcommand\vldbyear{2024}
\newcommand\vldbauthors{\authors}
\newcommand\vldbtitle{\shorttitle} 
\newcommand\vldbavailabilityurl{URL_TO_YOUR_ARTIFACTS}
\newcommand\vldbpagestyle{empty} 
\begin{document}
\title{MILLION: A General Multi-Objective Framework with Controllable Risk for Portfolio Management}

\author{Liwei Deng}
\affiliation{%
  \institution{University of Electronic Science and Technology of China}
}
\email{deng_liwei@std.uestc.edu.cn}

\author{Tianfu Wang}
\affiliation{%
  \institution{University of Science and Technology of China}
}
\email{tianfuwang@mail.ustc.edu.cn}

\author{Yan Zhao}
\affiliation{%
  \institution{University of Electronic Science and Technology of China\Letter}
}
\email{yanz@cs.aau.dk}

\author{Kai Zheng}
\affiliation{%
  \institution{University of Electronic Science and Technology of China\Letter}
}
\email{zhengkai@uestc.edu.cn}
\authornote{Liwei Deng and Tianfu Wang are equally contributed to this work. \Letter Yan Zhao and Kai Zheng are corresponding authors. They are with School of Computer Science and Engineering, and Shenzhen Institute for Advanced Study, UESTC. }

\begin{abstract}
Portfolio management is an important yet challenging task in AI for FinTech, which aims to allocate investors' budgets among different assets to balance the risk and return of an investment. In this study, we propose a general \textbf{M}ulti-object\textbf{I}ve framework with contro\textbf{LL}able r\textbf{I}sk for p\textbf{O}rtfolio ma\textbf{N}agement (\textbf{MILLION}), which consists of two main phases, i.e., return-related maximization and risk control. Specifically, in the return-related maximization phase, we introduce two auxiliary objectives, i.e., return rate prediction, and return rate ranking, 
combined with portfolio optimization to remit the overfitting problem and 
improve the generalization of the trained model to future markets. Subsequently, in the risk control phase, we propose two methods, i.e., portfolio interpolation and portfolio improvement, to achieve fine-grained risk control and fast risk adaption to a user-specified risk level. For the portfolio interpolation method, we theoretically prove that the risk can be perfectly controlled if the to-be-set risk level is in a proper interval. In addition, we also show that the return rate of the adjusted portfolio after portfolio interpolation is no less than that of the min-variance optimization, as long as the model in the reward maximization phase is effective. Furthermore, the portfolio improvement method can achieve greater return rates while keeping the same risk level compared to portfolio interpolation. Extensive experiments are conducted on three real-world datasets.
The results demonstrate the effectiveness and efficiency of the proposed framework. 
\end{abstract}

\maketitle

\pagestyle{\vldbpagestyle}
\begingroup\small\noindent\raggedright\textbf{PVLDB Reference Format:}\\
\vldbauthors. \vldbtitle. PVLDB, \vldbvolume(\vldbissue): \vldbpages, \vldbyear.\\
\href{https://doi.org/\vldbdoi}{doi:\vldbdoi}
\endgroup
\begingroup
\renewcommand\thefootnote{}\footnote{\noindent
This work is licensed under the Creative Commons BY-NC-ND 4.0 International License. Visit \url{https://creativecommons.org/licenses/by-nc-nd/4.0/} to view a copy of this license. For any use beyond those covered by this license, obtain permission by emailing \href{mailto:info@vldb.org}{info@vldb.org}. Copyright is held by the owner/author(s). Publication rights licensed to the VLDB Endowment. \\
\raggedright Proceedings of the VLDB Endowment, Vol. \vldbvolume, No. \vldbissue\ %
ISSN 2150-8097. \\
\href{https://doi.org/\vldbdoi}{doi:\vldbdoi} \\
}\addtocounter{footnote}{-1}\endgroup

\ifdefempty{\vldbavailabilityurl}{}{
\vspace{.3cm}
\begingroup\small\noindent\raggedright\textbf{PVLDB Artifact Availability:}\\
The source code, data, and/or other artifacts have been made available at \url{https://github.com/LIWEIDENG0830/MILLION}.
\endgroup
}

\section{Introduction}
Portfolio management is an essential component of a trading system, which allocates a budget among different possible financial assets according to different objectives, such as maximizing returns at a given risk level~\cite{Markowitz1952PortfolioS,FinRLMeta,RobustPort,Wu2021PortfolioMS}. In 1952, Markowitz introduced a pioneer work, called Modern Portfolio Theory (MPT)~\cite{Markowitz1952PortfolioS}. MPT aims to construct a portfolio by solving a combinational optimization problem, leading to a higher return per risk than trading an individual asset~\cite{Zhang2020DeepLF}. Recently, the benefit of the portfolio compared with investing a single asset is further confirmed~\cite{ericzivot,Zhang2020DeepLF}, e.g., Eric Zivot~\cite{ericzivot} shows that the risk of a long-only portfolio is always lower than that of an individual asset, for a given expected return, as long as assets are not perfect correlated. Due to the desirable property of investing a bucket of assets, portfolio management has drawn much attention over the past decades.

Existing studies on portfolio management can be roughly divided into three main lines, i.e., predict-then-optimize, reinforcement learning (RL) based methods, and deep learning (DL) based approaches, based on different optimization ways. The methods along the first line~\cite{RobustPort,Geng2023RethinkingAB,Cheng2023AGF,Li2015MovingAR,Chen2021MeanvariancePO,Han2023EfficientCS,Zhu2022WISEWB} first estimate future price or return rate of each asset, and then solve a combinational optimization problem, e.g., mean-variance model, to obtain the final portfolio. For example, Huang et al.~\cite{RobustPort} forecast the future price of each asset through a sliding window-based moving average and then optimize the robust median reversion problem with the estimated prices to obtain the portfolio. Despite its ability of great ease of use, the performance of these methods is strongly related to the accuracy of the forecasting model. Unfortunately, the accurate asset price or return rate cannot be accessible due to the volatility of the dynamic market. Thus, rather than focusing on accurate price prediction, RL-based methods~\cite{Hao2023StockPM,Liu2021FinRLDR,FinRLMeta,Bai2023MercuryAD,Wang2019AlphaStockAB,Jang2023DeepRL} aim to directly obtain a portfolio from the observed market state through maximizing the defined reward function. 
For example, Liu et al.~\cite{Liu2021FinRLDR} propose a general RL-based framework to achieve automated trading, in which they adopt classical RL methods, such as Proximal Policy Optimization (PPO), to optimize the neural networks using the cumulative return as their reward. However, these methods ignore the fact that the portfolio optimization problem is different from others, such as video gaming and cheesing, where the reward from many investing goals, such as Sharpe ratio~\cite{Sharpe1994TheSR} and cumulative return, is differentiable. Therefore, optimizing from a surrogate loss in RL is inefficient. The methods in the last line~\cite{Zhang2021AUE,Zhang2020DeepLF,Xu2020RelationAwareTF,Ye2020ReinforcementLearningBP} overcome this issue through directly optimizing the objective. For example, Zhang et al.~\cite{Zhang2020DeepLF} develop an end-to-end DL-based model and directly adopt Sharpe ratio as their objective. 

Despite growing interest in the last direction, existing studies are still limited to the following aspects. 
First, the market is highly dynamic and the historical information of each asset has a relatively low signal-to-noise ratio, 
which causes a poor generalization of the trained model to the future market. Current studies on this problem either adopt simple architecture~\cite{Zhang2020DeepLF,Zhang2021AUE},
or incorporate outsourcing information~\cite{tfwang-kdd-2024-comet}, such as financial news~\cite{Ye2020ReinforcementLearningBP,Zhou2020DomainAM,Deng2019KnowledgeDrivenST,Li2021AME,Ma2022MultisourceAC,Ding2015DeepLF}, to enrich the information ratio in raw data. The former approaches may not be effective enough to extract the necessary features,
while the outsourcing information may not be always available in the latter approaches. 
Second, different investors may have different tastes in taking risks. For example, an investor with limited wealth usually prefers to allocate their budget to low-risk bonds and stocks, while an investor with enough spare money prefers to take more risks to achieve higher expected returns. Current learning-based approaches only return a fixed portfolio for a given market state, which cannot satisfy different investors' demands with different risk levels. Moreover, existing DL-based methods cannot achieve fine-grained risk control. For example, Zhang et al.~\cite{Zhang2021AUE} combine the return objective and risk term using a Lagrange form, where the risk multiplier is hard to preset, i.e., higher weights risk underfitting the return while lower weights weaken risk control.



To tackle these problems, we propose a general \textbf{M}ulti-object\textbf{I}ve framework with contro\textbf{LL}able r\textbf{I}sk for p\textbf{O}rtfolio ma\textbf{N}agement, named MILLION. In summary, we decompose the mean-variance optimization~\cite{Markowitz1952PortfolioS} into two phases, i.e., return-related maximization and risk control. Specifically, in the first phase, rather than introducing outsourcing data to improve the data quantity, we only leverage the assets' historical price and volume information despite our framework also being easy to modify to incorporate other source information. To improve the generalization of the trained model to the future market, we introduce two auxiliary objectives, i.e., assets return rate forecasting and assets return rate ranking, which are highly related to portfolio construction. 
Furthermore, to control risk to a user-specified risk level, we propose a simple but effective portfolio interpolation method in the risk control phase, in which the constructed portfolio is obtained through interpolation between the portfolios from the reward maximization phase and the min-variance optimization. We theoretically prove that the risk can be perfectly controlled as long as the given risk level is in a risk interval. Besides, we also theoretically show that the expected return of the portfolio after interpolation is always greater than the portfolio from the min-variance optimization if the reward maximization is effective. In addition, based on the idea of the interpolation approach, we further propose a portfolio improvement method to achieve higher portfolio return with the same risk level compared with the interpolation method. It should be emphasized that the risk control components can be appended to any existing portfolio construction approaches to help them control the risk. 

In summary, our contributions are as follows:

\begin{itemize}[leftmargin=*]
\item We propose a general DL-based framework with multi-objective learning to improve the generalization of the trained model to perform better in the future market.
\item We develop two risk control approaches, i.e., portfolio interpolation and portfolio improvement, which can be used to fit investors' personalized risk preferences. According to the best of our knowledge, this is the first attempt to achieve fine-grained risk control in learning based portfolio construction. 
\item We conduct extensive experiments to demonstrate the effectiveness of MILLION and its components. 
\end{itemize}

\section{Preliminaries}
We present necessary concepts and define the problem addressed.

\begin{figure*}[t]
\centering
\includegraphics[width=0.86\linewidth]{./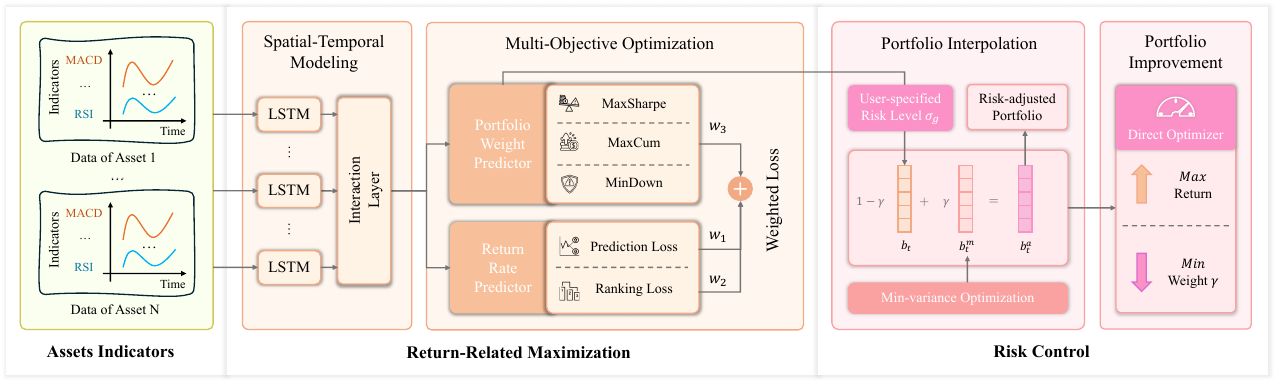}
\vspace{-0.2cm}
\caption{The proposed MILLION framework}
\label{fig:framework}
\end{figure*}


\begin{definition}[Holding Period]
\emph{
A holding period is the minimum time unit to invest an asset. We follow previous studies that divide the whole investment period into multiple non-overlapped holding periods with fixed length, such as one day or one month. 
}
\end{definition}

\begin{definition}[Asset Prices]
\emph{
The price of an asset is defined as a time series $\bm{p}^{i}=\{p_1^{i}, p_2^{i}, ..., p_t^{i}\}$, where $p_t^{i}$ is the price of asset $i$ at $t$.
}
\end{definition}

\begin{definition}[Return Rate]
\emph{
The return rate of an asset $i$ at time $t$ is defined as $r_t^i=p_{t+1}^{i}/p_t^{i}-1$, which presents that if an investor spent $n$ cash to buy asset $i$ at time $t$, he can get profit $n * r_t^i$. 
}
\end{definition}

\begin{definition}[Risk]
\emph{
Following the risk definition in MPT~\cite{Markowitz1952PortfolioS}, we define the risk as volatility (i.e., variance) of return rate (i.e., $\sigma_i^2=E[(r_t^i-\overline{r}^i)^2]$, where $\overline{r}^i$ is the expectation of $r^i$). The idea behind this definition is that the return rate of an asset has a lower variance, and the certainty of investing in this asset is higher, which induces lower risk. In this study, $\sigma_i^2$ is calculated from a sliding window of historical asset return rate, in which the window size is consistent with $w$ in temporal modeling (cf. Section~\ref{sec:spatiotemporal}). 
}
\end{definition}

\begin{definition}[Long Position]
\emph{
The long position is to buy an asset $i$ at time $t_1$ and then sell it at time $t_2$, aiming for a profit derived from an increase in the asset's price over this period. The produced profit can be formulated as $n * r^i_{t_1, t_2}$, where $n$ and $r^i_{t_1, t_2}$ denotes the investment amount  and return rate from $t_1$ to $t_2$ of asset $i$, respectively.
}
\end{definition}

Conversely, a short position operation represents the opposite strategy, where investors profit from the decline in an asset's price.
In this study, we prohibit the short position operation, i.e., investors can only get profits when the prices of invested assets increase. 


\begin{definition}[Portfolio]
\emph{
Given $N$ assets to be invested, a portfolio is defined as a vector $\bm{b}=(b^{1}, b^{2}, ..., b^{N})$, where $b^{i}$ presents the proportion of the investment on asset $i$ and $\sum_{i=1}^{N} b^{i}=1$. The return rate of portfolio $\bm{b}$ is $\bm{b}^T\bm{r}$, where $\bm{r}$ indicates the return rates of N assets. The risk of portfolio is $\bm{b}^T\bm{\Sigma}\bm{b}$, where $\bm{\Sigma} \in \mathcal{R}^{N\times N}$ is the return rate covariance matrix of N assets. 
}
\end{definition}

\begin{definition}[Portfolio Management]
\emph{
Portfolio management is a sequential investment, which determines portfolio $\bm{b}$ at the end of each holding period. We pursue to achieve two goals in this study: (1) maximizing the portfolio return; (2) controlling the portfolio risk to a user-specific risk level. 
}
\end{definition}

\section{Methodology}

We propose a general multi-objective framework with controllable risk for portfolio management, named MILLION, as shown in Figure~\ref{fig:framework}, which consists of two main phases, i.e., return-related maximization and risk control. In the return-related maximization phase, we adopt a DL-based spatio-temporal model to construct a portfolio for each time step to maximize the objectives (e.g., Sharpe ratio) that are related to the portfolio return, which provides a significant signal to get profits from the investment behaviors. Then, in the risk control phase, we propose two novel methods (i.e., portfolio interpolation and portfolio improvement) to adjust the portfolio obtained from the return-related maximization phase to fulfill a user-specified risk level. Each component is elaborated in the following sections.

\subsection{Return Maximization}
Earning money is the primary goal for most investors, in which understanding the current market state is a fundamental and critical step to provide instructions to guide the investors' decision process. In this study, rather than designing a new DL-based model to effectively extract more powerful and predictive representations from raw features, we focus on developing a general framework that can fit various models. Thus, we directly adopt existing models, i.e., LSTMHA~\cite{Feng2018EnhancingSM}, with slight modifications to encode the market state. 
It should be noted that another model with the ability of spatial-temporal modeling can also be adopted to encode the state of assets~\cite{deng2024learning,tfwang-tsc-2023-hrl-acra,liuitransformer,wu2021autoformer,zhou2021informer}.
As shown in the left panel of Figure~\ref{fig:framework}, we model the spatial (i.e., relations among assets) and temporal (i.e., relations along timestamp) information with attention technique and LSTM, respectively, in which the covariance matrix between assets' historical return rate is incorporated into the attention module. After the representation of each asset is obtained, a multi-objective optimization module is appended, in which two portfolio-related objectives are adopted to improve the model generalization.

\subsubsection{Assets Indicators} 
\label{sec:indicators}
Following previous studies, e.g., FinRL~\cite{Liu2021FinRLDR}, we incorporate the eight indicators derived from the prices and volumes as our model input. These indicators include Moving Average Convergence/Divergence (MACD), Bollinger Bands (BOLL) (i.e., lower bound and upper bound of BOLL), Relative Strength Index (RSI), Commodity Channel Index (CCI), Directional Movement Index (DMI), and Simple Moving Average (SMA) (i.e., 30-days and 60-days).
Given the disparate scales across these financial metrics, we employ Z-score normalization to standardize them.

\subsubsection{Spatio-Temporal Modeling} 
\label{sec:spatiotemporal}
To construct an effective portfolio, we necessitate insight into the future market, particularly regarding the performance (e.g., return rate) of individual assets. To achieve this goal, we differentiate the modeling of each asset into temporal and spatial relations. 

\noindent \textbf{Temporal Modeling.} We use the vector $\bm{x}_t^i$ to denote the history state of asset $i$ at time $t$, which consists of eight indicators as stated in section~\ref{sec:indicators}. Thus, the current state of asset $i$ at time $t$ is presented by a window size of $\bm{x}_t^i$ (i.e., $\bm{X}_t^i=\{\bm{x}_{t-w}^i, \cdots, \bm{x}_t^i\}$). A one-layer $\mathit{LSTM}$ is adopted to recursively encode $\bm{X}^i$ into a vector. 
\begin{equation}
\footnotesize
\bm{h}^i_t = \mathit{LSTM}(\bm{X}_t^i)
\end{equation}

\noindent where $\bm{h}^i_t \in \mathcal{R}^{d}$ is the representation of asset $i$ at time $t$, which models the intra-correlation along the timestamps. 

\noindent \textbf{Spatial Modeling.} Despite the temporal dependency is encoded into $\bm{h}^i_t$, the inter-correlation among assets are not included. 
Therefore, we leverage the attention technique to model the dynamic relations among assets, in which the covariance matrix between assets’ historical return rates is integrated to remit the burden in the learning process. 
\begin{equation}
\footnotesize
\begin{aligned}
\hat{\bm{h}}_t^i = \frac{1}{\beta+1} \sum_{k=1}^{N} \alpha_k \cdot \bm{h}_t^{k} + \frac{\beta}{\beta+1} \sum_{k=1}^{N} c_{i,k} \cdot \bm{h}_t^{k} \\
{\color{black}{\alpha_k = \frac{\exp(\bm{W}\bm{h}_t^k)}{\sum_{j=1}^{N} \exp(\bm{W}\bm{h}_t^j)} \ \ \ \ \ \ \ \ \ \ \ \ \ \ \ \ \ \ }}
\end{aligned}
\end{equation}
\noindent where $\bm{W}_a\in \mathcal{R}^{d_1\times d}$ is the to-be-learned parameters, $\beta$ is a scalar to balance the weight between attention weights and covariance matrix, which is updated with the training goes on, $c_{i,k}$ presents the covariance between asset $i$ and $k$, and $\hat{\bm{h}}_t^i$ is the encoded representation for asset $i$ at time $t$. 

\noindent \textbf{Portfolio Construction.} Based on the extracted feature from previous modules, we adopt a two-layer $\mathit{MLP}$ with $\mathit{ReLU}$ activation to evaluate each asset and construct a portfolio as follows:
\begin{equation}
\footnotesize
\begin{aligned}
\label{eq:portfolio_construction}
v^i_t &= {\color{black}{MLP_p(\bm{\hat{h}}_t^i)}} \\
b^i_t &= \frac{exp(v^i_t)}{\sum_{k=1}^{N} exp(v^k_t)}
\end{aligned}
\end{equation}

\noindent where $v^i_t$ indicates the estimated valuation of investing on asset $i$, and $b^i_t$ is the portfolio weight of asset $i$.

\subsubsection{Multi-Objective Optimization} 
After obtaining the portfolio weight at each time step, we can directly optimize the Sharpe ratio or the cumulative return as the same with previous studies~\cite{Zhang2020DeepLF,Zhang2021AUE}. 
However, while neural networks offer powerful representational ability, they often face challenges in generalizing to future market conditions effectively. 
To remit this problem, we propose a shift from a singular objective to a multiple objective optimization framework, which incorporates two auxiliary loss functions: asset return rate prediction and asset return rate ranking. In the latter, we first elaborate the portfolio optimization and then the auxiliary optimization, separately. 

\noindent \textbf{Portfolio Optimization.} To optimize the constructed portfolio, there are two common objectives~\cite{Zhang2020DeepLF} as follows:
\begin{equation}
\label{eq:maxsharpe}
\footnotesize
\begin{aligned}
(MaxCum) \mathcal{L}_{mc} &= \prod_{t=1}^{T} (r_t^p + 1) \\
(MaxSharpe) \mathcal{L}_{ms} &= \frac{Mean(\{r_t^p\}_{t=1}^{T})}{Std(\{r_t^p\}_{t=1}^{T})} \\
r_t^p &= {\color{black}{\bm{b}_t^T \bm{r}_t - c_t |\bm{b}_t^T - \bm{b}_{t-1}^T|_1}}
\end{aligned}
\end{equation}

\noindent where $\bm{r}_t$ represents the return rate of each assets at time $t$, $r_t^p$ is the corresponding portfolio return rate, $c_t$ is the transaction cost rate, and $T$ is the total number of holding period. $\mathit{MaxCum}$ focuses on maximizing the cumulative return, which will drive the model to construct a centralized portfolio (i.e., investing in a single asset that may achieve the highest return rate among other assets). $\mathit{MaxSharpe}$ not only focuses on maximizing the portfolio return rate at each time but also aims to minimize its standard deviation, which will impose the model to construct a relatively conservative portfolio. Except for the two commonly used objectives, we also develop another objective as follows:
\begin{equation}
\label{eq:mindown}
\footnotesize
(MinDown) \mathcal{L}_{md} = - \sum_{t=1}^{T}\max(-r_t^p + \delta_d, 0)
\end{equation}
\noindent where $\delta_d$ indicates a threshold, which presents the investors' expected return rate in each holding period. The goal of $\mathit{MinDown}$ is to construct portfolios that can achieve a return rate larger than the given threshold $\delta_d$. The lower $\delta_d$ is, the more conservative the constructed portfolio is, which will endow the model with roughly risk-control ability. Moreover, $\delta_d$ is unnecessary to be fixed, which can be dynamic according to the situation of the current market state. For example, 
a simple implementation is to replace $\delta_d$ with the return rate of a benchmark such as NAS100 index.

\noindent \textbf{Auxiliary Optimization.} Directly optimizing a single objective in portfolio optimization, the trained model may suffer from the overfitting problem (i.e., it has poor generalization in the future market) due to the highly dynamic market and low signal-to-noise ratio in historical information. To remit this problem, we introduce two auxiliary objectives~\cite{Ding2015DeepLF,Gao2021GraphBasedSR,Zheng2023RelationalTG,Feng2018TemporalRR,Wang2022AdaptiveLP}, which are optimized combining with the objective in portfolio optimization. 
\begin{equation}
\footnotesize
\begin{aligned}
\label{eq:prediction_obj}
(Prediction) \mathcal{L}_{p} &= {\color{black}{\sum_{t=1}^{T} ||\hat{\bm{r}}_{t} - \bm{r}_{t}||_2}} \\
(Ranking) \mathcal{L}_{r} &= \sum_{t=1}^{T} \sum_{i=1}^{N} \sum_{j=1}^{N} max(-(\hat{r}_t^{i}-\hat{r}_{t}^{j})(r_t^i - r_t^j), 0)
\end{aligned}
\end{equation}

\noindent where $\hat{\bm{r}_t}=[\hat{r}_t^1, \hat{r}_t^2, \cdots, \hat{r}_t^N]$ and $\bm{r}_t=[r_t^1, r_t^2, \cdots, r_t^N]$ is the predicted and the ground-truth return rate of N assets. $\hat{\bm{r}_t}$ is obtained through a $\mathit{MLP}$ with $\hat{\bm{h}}_t^i$ as inputs. This neural networks share the same architecture with $\mathit{MLP}_p$ in portfolio construction but with different parameters. 
With these objectives, we define our final optimization objective as follows:
\begin{equation}
\footnotesize
\mathcal{L} = -\zeta_m \mathcal{L}_{*} + \zeta_p \mathcal{L}_p + \zeta_r \mathcal{L}_r
\end{equation}
\noindent where $*$ is in $\{mc, ms, md\}$, and $\zeta_m$, $\zeta_p$, and $\zeta_r$ are the weights to balance the contributions of different objectives. In our experiments, we notice the performance of the constructed portfolio is sensitive to the weights. To reduce the efforts to search an optimal weights, we follow Kendal et al.~\cite{Kendall2017MultitaskLU} to set the weights adaptively. 
\begin{equation}
\label{eq:return_max}
\footnotesize
\mathcal{L} = -\frac{1}{\zeta_m^2} \mathcal{L}_{*} + \frac{1}{\zeta_p^2} \mathcal{L}_p + \frac{1}{\zeta_r^2} \mathcal{L}_r + \sum_{i\in \{m,p,r\}} \log \zeta_{i}
\end{equation}

\noindent where $\zeta_m$, $\zeta_p$, and $\zeta_r$ are to-be-learned parameters, which are adaptively updated in the training phase. The parameters of the neural networks are optimized through minimizing $\mathcal{L}$ in Equation~\ref{eq:return_max}. 

\subsection{Risk Control}

Despite risk management being a vital component in portfolio management, current DL-based and RL-based approaches are hard to achieve fine-grained risk control. 
For example, 
Zhang et al.~\cite{Zhang2021AUE} utilize neural networks to optimize the Lagrange formulation mean-variance model, in which the risk term is weighted and added to the portfolio return. It can achieve rough risk control in the training phase but without guaranteeing the unseen data. Moreover, existing studies always construct one portfolio for all investors at each holding period, which cannot satisfy investors' different preferences for risk-taking. Thus, we propose two methods, i.e., portfolio interpolation and portfolio improvement, to deal with these issues, which are elaborated as follows.

\subsubsection{Portfolio Interpolation} We denote $\bm{b}_t^m$ as the portfolio at time $t$ obtained from the min-variance optimization, which has the lowest risk compared to any other portfolios. To control the risk of the constructed portfolio from the return maximization phase, we obtain the risk-adjusted portfolio $\bm{b}_t^a$ with interpolation as follows:
\begin{equation}
\footnotesize
\label{eq:interpolation}
\bm{b}_t^a = (1-\gamma_t) \bm{b}_t +  \gamma_t \bm{b}_t^m
\end{equation}

\noindent where $\gamma_t \in [0,1]$ is the weight to control the amount of interpolation. 
With the above interpolation method, we have the following proposition, in which we denote $\sigma^a_t$, $\sigma_t$, and $\sigma_t^m$ as the risk of the three portfolios $\bm{b}_t^a$, $\bm{b}_t$, and $\bm{b}_t^m$, respectively (e.g., $\sigma_t=\bm{b}_t^T\Sigma_t\bm{b}_t$).
\begin{proposition} 
\label{prop:risk}
$\sigma^a_t$ is a decreasing monotone function in terms of $\gamma_t$
if $\delta_t \ne \delta_t^m$, whose value is in the interval [$\sigma_t^m$, $\sigma_t$].
\end{proposition}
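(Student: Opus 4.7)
The plan is to express $\sigma_t^a$ in closed form as a scalar function of $\gamma_t$ and then differentiate.

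First I would expand $\sigma_t^a = (\bm{b}_t^a)^T \Sigma_t \bm{b}_t^a$ using the definition in Equation~\ref{eq:interpolation}, obtaining a quadratic in $\gamma_t$ with three summands: $(1-\gamma_t)^2 \sigma_t$, $\gamma_t^2 \sigma_t^m$, and the cross term $2\gamma_t(1-\gamma_t)\,\bm{b}_t^T \Sigma_t \bm{b}_t^m$. The two endpoint values already reproduce $\sigma_t^a|_{\gamma_t=0}=\sigma_t$ and $\sigma_t^a|_{\gamma_t=1}=\sigma_t^m$, which yields the interval claim $[\sigma_t^m,\sigma_t]$ as soon as monotonicity is established.

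The crucial step is simplifying the cross term. I would invoke the first-order optimality condition of the min-variance problem: since $\bm{b}_t^m$ minimizes $\bm{b}^T \Sigma_t \bm{b}$ subject to the budget constraint $\mathbf{1}^T \bm{b} = 1$, stationarity gives $\Sigma_t \bm{b}_t^m = \lambda_t \mathbf{1}$ for a scalar multiplier $\lambda_t$. Pairing with $\bm{b}_t^m$ yields $\lambda_t = \sigma_t^m$, and pairing with any other budget-feasible vector, in particular $\bm{b}_t$, gives $\bm{b}_t^T \Sigma_t \bm{b}_t^m = \sigma_t^m$. Substituting this identity collapses the quadratic cleanly to $\sigma_t^a = (1-\gamma_t)^2(\sigma_t - \sigma_t^m) + \sigma_t^m$.

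Differentiating gives $d\sigma_t^a/d\gamma_t = -2(1-\gamma_t)(\sigma_t - \sigma_t^m)$, which is non-positive on $[0,1]$. Strict monotonic decrease then follows once $\sigma_t > \sigma_t^m$; the hypothesis $\delta_t \ne \delta_t^m$ forces $\bm{b}_t \ne \bm{b}_t^m$, and strict convexity of $\bm{b}\mapsto \bm{b}^T \Sigma_t \bm{b}$ at its unique minimizer $\bm{b}_t^m$ supplies the strict inequality. The range claim then follows from the endpoint evaluations together with continuity of $\sigma_t^a$ in $\gamma_t$.

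The main obstacle I anticipate is justifying the identity $\bm{b}_t^T \Sigma_t \bm{b}_t^m = \sigma_t^m$ under the long-only constraint that the paper imposes. If the no-short-selling constraint is inactive at $\bm{b}_t^m$, the Lagrangian argument above goes through verbatim; if it is active, the stationarity condition acquires KKT multipliers $\bm{\mu}\ge 0$ with $\mu_i b_{t,i}^m = 0$, and one has to check that the inner product with $\bm{b}_t$ (whose entries are nonnegative by construction in Equation~\ref{eq:portfolio_construction}) still collapses the multiplier contribution to zero, or absorb it into the bound. That bookkeeping is the delicate part; the rest of the proof is routine algebra.
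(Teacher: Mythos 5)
Your overall strategy---expand $\sigma_t^a$ as a quadratic in $\gamma_t$ and control its coefficients---parallels the paper's, but your key lemma is genuinely different and, as stated, does not hold. The identity $\bm{b}_t^T\Sigma_t\bm{b}_t^m=\sigma_t^m$ is valid only when the no-short-selling constraints are inactive at $\bm{b}_t^m$. In the long-only problem the KKT conditions read $\Sigma_t\bm{b}_t^m=\lambda_t\mathbf{1}+\bm{\mu}$ with $\bm{\mu}\ge 0$ and $\mu_i\,(b^m_t)^i=0$, so pairing with $\bm{b}_t$ gives
\begin{equation*}
\bm{b}_t^T\Sigma_t\bm{b}_t^m=\lambda_t+\bm{b}_t^T\bm{\mu}\ \ge\ \lambda_t=\sigma_t^m ,
\end{equation*}
and since $\bm{b}_t$ is produced by a softmax its entries are strictly positive, so the multiplier term is strictly positive whenever any nonnegativity constraint binds. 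Hence the ``clean collapse'' to $(1-\gamma_t)^2(\sigma_t-\sigma_t^m)+\sigma_t^m$ fails in general; the bookkeeping you defer at the end is exactly where the proof lives. The fix is that only the \emph{inequality} is needed: writing $\sigma_t^a=A\gamma_t^2+B\gamma_t+\sigma_t$ with $u=\bm{b}_t^T\Sigma_t\bm{b}_t^m$, $A=\sigma_t-2u+\sigma_t^m=(\bm{b}_t-\bm{b}_t^m)^T\Sigma_t(\bm{b}_t-\bm{b}_t^m)\ge 0$ and $B=2(u-\sigma_t)$, the derivative on $[0,1]$ satisfies $2A\gamma_t+B\le 2A+B=2(\sigma_t^m-u)\le 0$, which yields the monotone decrease; the endpoint evaluations $\sigma_t^a|_{\gamma_t=0}=\sigma_t$ and $\sigma_t^a|_{\gamma_t=1}=\sigma_t^m$ then give the interval claim.

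Two further remarks. First, the paper reaches the same sign information without any optimality conditions: every interpolate $\bm{b}_t^a$ is itself a feasible long-only portfolio, so $\sigma_t^a\ge\sigma_t^m$ by the very definition of the min-variance portfolio; combined with $A\ge 0$ (which the paper obtains from the eigendecomposition of $\Sigma_t$, i.e.\ $A=\sum_i\lambda_i(c_i-d_i)^2$), this pins the vertex of the upward-opening parabola at $\gamma_t\ge 1$ and handles the degenerate linear case separately. That route is more elementary and indifferent to which constraints are active, whereas your KKT route, once repaired to the inequality form, is arguably more informative (it quantifies the cross term). Second, your appeal to ``strict convexity at the unique minimizer'' to deduce $\sigma_t>\sigma_t^m$ from $\bm{b}_t\ne\bm{b}_t^m$ is not available: $\Sigma_t$ is only positive semi-definite, so the minimizer need not be unique and distinct portfolios can share the minimal variance. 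You do not need this step---the proposition's hypothesis is that the two risks differ, and $\sigma_t\ge\sigma_t^m$ always holds, so $\sigma_t>\sigma_t^m$ is immediate.
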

\begin{proof}
\vspace{-0.3cm}
\begin{equation}
\footnotesize
\begin{aligned}
\label{eq:risk_control}
\sigma^a_t &= {\bm{b}_t^a}^T \Sigma_t \bm{b}_t^a \\
&=[(1-\gamma_t) \bm{b}_t +  \gamma_t \bm{b}_t^m]^T \Sigma_t [(1-\gamma_t) \bm{b}_t +  \gamma_t \bm{b}_t^m] \\
&=(\bm{b}_t^T\Sigma_t\bm{b}_t - 2\bm{b}_t^T\Sigma_t\bm{b}_t^m + {\bm{b}^m_t}^T\Sigma_t\bm{b}^m_t) \gamma_t^2 
\\
&\ \ \ \ \ + 2(\bm{b}_t^T\Sigma_t\bm{b}_t^m - \bm{b}_t^T\Sigma_t\bm{b}_t) \gamma_t + \bm{b}_t^T\Sigma_t\bm{b}_t
\end{aligned}
\end{equation}
We represent the portfolios with the eigenvectors of $\Sigma_t$ (i.e., $\bm{b}_t=\sum_{i=1}^{N}c_i\bm{x}_i$ and $\bm{b}_t^m=\sum_{i=1}^{N}d_i\bm{x}_i$). The eigenvalue of $\bm{x}_i$ is denoted as $\lambda_i$. It should be noted that $\Sigma_t$ is symmetric and semi-positive whose eigenvalue is non-negative (i.e., $\lambda_i\ge0, \forall i\in[1,N]$). 
\begin{equation}
\footnotesize
\begin{aligned}
\bm{b}_t^T\Sigma_t\bm{b}_t - 2\bm{b}_t^T\Sigma_t\bm{b}_t^m + {\bm{b}^m_t}^T\Sigma_t\bm{b}^m_t &= \sum_{i=1}^{N} \lambda_i c_i^2 - 2\sum_{i=1}^{N}\lambda_i c_i d_i + \sum_{i=1}^{N} \lambda_i d_i^2 \\
&= \sum_{i=1}^{N} \lambda_i (c_i - d_i)^2 \ge 0
\end{aligned}
\end{equation}
When $\sigma_t^a$ is a quadratic function in terms of $\gamma_t$, we can know that this function is an upward opening and it gets the lowest when $\gamma_t$ equals $1$. Thus, this function is decreasing monotone when $\gamma_t$ varies from $0$ to $1$ (i.e., $\sigma_t^a \in [\sigma_t^m, \sigma_t]$). Next, when the quadratic term equals $0$, this function degenerates to a linear function. It should be noted that $\sigma_t^m \le {\bm{b}^*_t}^T\Sigma_t\bm{b}^*_t$ from the definition of $\bm{b}_t^m$, where $\bm{b}^*_t$ represents any portfolio. We analyze the coefficient of the primary term as follows:
\begin{equation}
\footnotesize
\begin{aligned}
2(\bm{b}_t^T\Sigma_t\bm{b}_t^m - \bm{b}_t^T\Sigma_t\bm{b}_t) &= 2\bm{b}_t^T\Sigma_t\bm{b}_t^m - \bm{b}_t^T\Sigma_t\bm{b}_t - {\bm{b}_t^m}^T\Sigma_t\bm{b}_t^m - \\
& \ \ \ \ \ \ \bm{b}_t^T\Sigma_t\bm{b}_t + {\bm{b}_t^m}^T\Sigma_t\bm{b}_t^m \\
&= - \sum_{i=1}^{N} \lambda_i (c_i - d_i)^2 - (\sigma_t - \sigma_t^m) \le 0
\end{aligned}
\end{equation}
We can see that when the function is linear, it is also decreasing monotone when $\gamma_t$ in $[0,1]$. Thus, combining these two situations, we can conclude that $\sigma_t^a \in [\sigma_t^m, \sigma_t]$ and with the increase of $\gamma_t$, the portfolio risk will be gradually decreased if $\sigma_t^m \ne \sigma_t$. 
\vspace{-0.2cm}
\end{proof}

With proposition~\ref{prop:risk}, we can control the portfolio risk to a user-specified risk level $\sigma_g$ by replacing the left of Equation~(\ref{eq:risk_control}) with $\sigma_g$ and solving $\gamma_t$ as long as $\sigma_g \in [\sigma_t^m, \sigma_t]$. Since this equation is a quadratic equation with only $\gamma_t$ unknown, it is easy to calculate the exact value of $\gamma_t$ when the expected risk is given. Therefore, the portfolio can be fast adapted to satisfy different investors' requests in terms of risk level.
After analysing the effect of risk through interpolation, we continue to study the effect on the portfolio return rate. We have the following proposition, in which we denote $r_t^a$, $r^p_t$, and $r_t^m$ as the return rate of the three portfolios $\bm{b}_t^a$, $\bm{b}_t$, and $\bm{b}_t^m$, respectively (e.g., $r^p_t=\bm{b}_t^T\bm{r}_t$). 
\begin{proposition}
\label{prop:return}
$r_t^a$ is always no less than $r_t^m$ if the model
is effective (i.e., $r_t^p \ge r_t^m$). 
\vspace{-0.3cm}
\end{proposition}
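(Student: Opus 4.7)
The plan is to exploit the linearity of the return functional $\bm{b} \mapsto \bm{b}^T \bm{r}_t$ under the interpolation defined in Equation~(\ref{eq:interpolation}). Unlike the risk $\sigma^a_t$, which is quadratic in $\gamma_t$ and therefore required the eigenvalue expansion carried out in Proposition~\ref{prop:risk}, the return is a linear functional of the portfolio vector, so this proof should be considerably shorter and avoid spectral arguments entirely.

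First I would substitute $\bm{b}_t^a = (1-\gamma_t)\bm{b}_t + \gamma_t \bm{b}_t^m$ into $r_t^a = {\bm{b}_t^a}^T \bm{r}_t$ to obtain the convex combination
\begin{equation*}
\footnotesize
r_t^a = (1-\gamma_t)\,\bm{b}_t^T\bm{r}_t + \gamma_t\,{\bm{b}_t^m}^T\bm{r}_t = (1-\gamma_t)\,r_t^p + \gamma_t\,r_t^m.
\end{equation*}
Then I would subtract $r_t^m$ from both sides to get $r_t^a - r_t^m = (1-\gamma_t)(r_t^p - r_t^m)$. Since $\gamma_t \in [0,1]$ by construction, the factor $(1-\gamma_t)$ is non-negative, and under the effectiveness hypothesis $r_t^p \ge r_t^m$ the second factor is also non-negative, yielding $r_t^a \ge r_t^m$ as claimed.

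There is essentially no serious obstacle here: the argument reduces to observing that a convex combination of two scalars lies between them, so the interpolated return is sandwiched between $r_t^m$ and $r_t^p$. The only point that deserves a brief remark in the write-up is that the domain $\gamma_t \in [0,1]$ is exactly the range over which Proposition~\ref{prop:risk} guarantees feasible risk control, so the two propositions together ensure that any user-specified risk $\sigma_g \in [\sigma_t^m, \sigma_t]$ admits a $\gamma_t$ whose associated portfolio simultaneously satisfies the risk constraint and dominates the min-variance portfolio in expected return. I would close by noting that the same linearity argument additionally gives the upper bound $r_t^a \le r_t^p$, which clarifies that interpolation trades return for risk in a monotone fashion and motivates the subsequent portfolio improvement method.
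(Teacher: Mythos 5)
Your proof is correct and takes essentially the same route as the paper: substitute the interpolation into the return, obtain the convex combination $r_t^a=(1-\gamma_t)r_t^p+\gamma_t r_t^m$, and conclude from $\gamma_t\in[0,1]$ and $r_t^p\ge r_t^m$. If anything, your factored form $r_t^a-r_t^m=(1-\gamma_t)(r_t^p-r_t^m)$ makes the use of the effectiveness hypothesis slightly more explicit than the paper's phrasing.
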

\begin{proof}
\begin{equation}
\footnotesize
\begin{aligned}
r_t^a &= [(1-\gamma_t)\bm{b}_t + \gamma_t\bm{b}_t^m]^T \bm{r}_t
= (1-\gamma_t)\bm{b}_t^T \bm{r}_t + \gamma_t{\bm{b}_t^m}^T\bm{r}_t \\
&= (1-\gamma_t)r_t^p+\gamma_t r_t^m
\end{aligned}
\end{equation}
From above equation, we can see that the portfolio return rate is a linear function in terms of $\gamma_t$, which achieves the lowest $r_t^m$ when $\gamma_t$ equals to $1$. Thus, $r_t^a$ is no less than $r_t^m$. 
\vspace{-0.2cm}
\end{proof}

Proposition~\ref{prop:return} demonstrates that no matter how we interpolate, the return rate of $\bm{b}_t^a$ is always bounded by it of $\bm{b}_t$ (i.e., upper bound) and $\bm{b}_t^m$ (i.e., lower bound), which means that the portfolio interpolation method is safe and will not generate a portfolio that causes dramatic loss. 

\begin{figure}[t]
\centering
\subfigure[Interpolation]{\includegraphics[width=0.34\linewidth]{./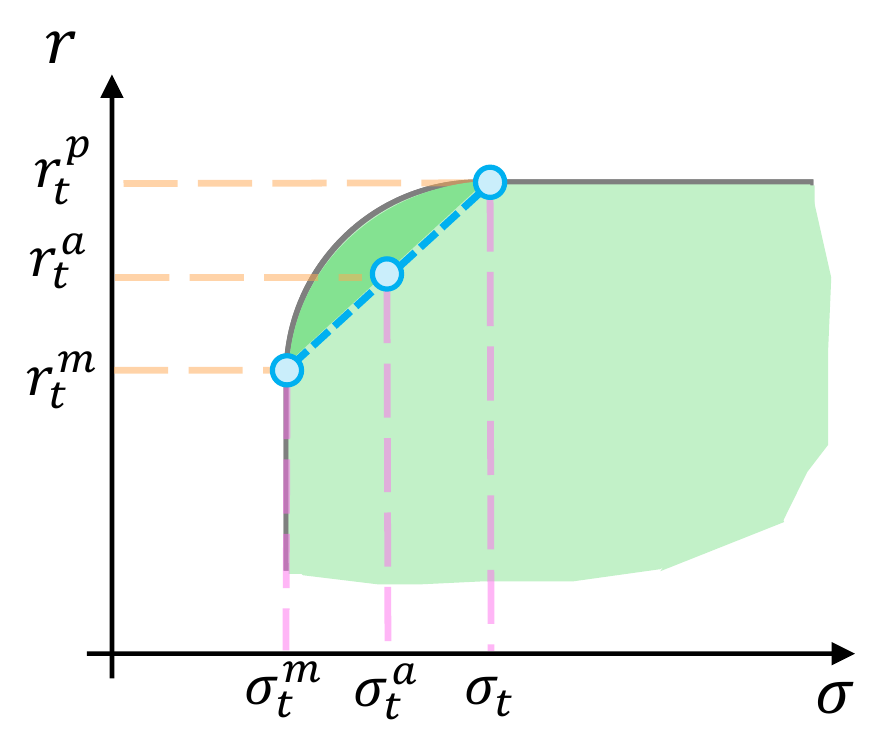}}
\subfigure[Improvement]{\includegraphics[width=0.34\linewidth]{./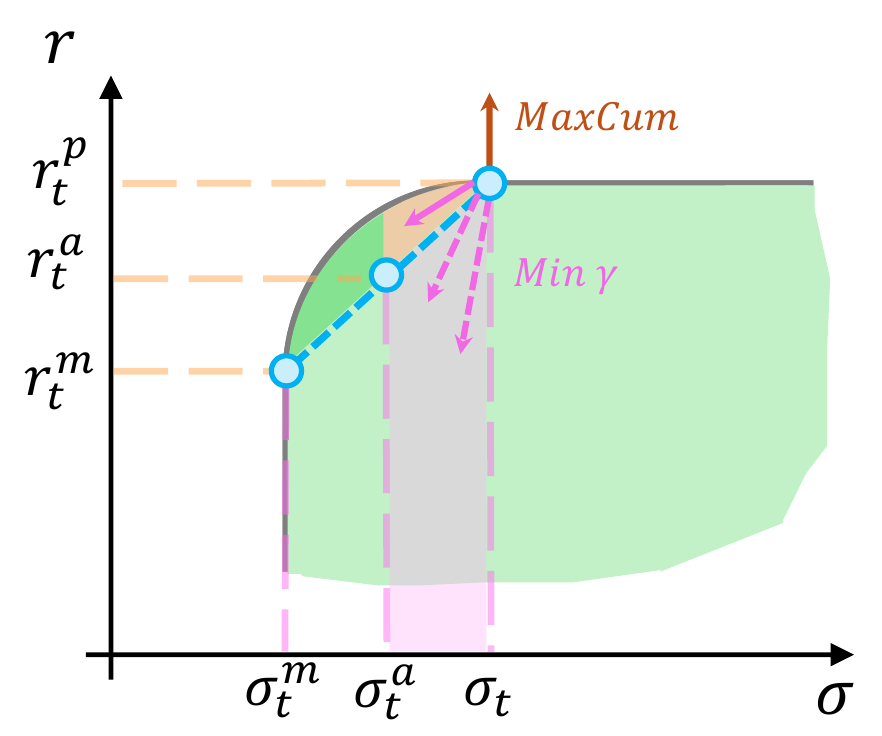}}
\vspace{-0.4cm}
\caption{An illustration of portfolio interpolation.}
\label{fig:illustration_interpolation}
\vspace{-0.55cm}
\end{figure}


\subsubsection{Portfolio Improvement}
Despite the advantages of the portfolio interpolation method, the portfolio after interpolation may be far away from the efficient frontier, which means there is an opportunity to improve the portfolio return while keeping the user-specified risk unchanged. We provide an illustration of this situation as shown in Figure~\ref{fig:illustration_interpolation}(a). The horizontal and vertical axis represent the risk and return, respectively. The blue dash line presents the portfolio interpolation
. From this figure, we can see that there may exist an orange area that cannot be obtained through interpolation, in which there exist points that have the same risk as the interpolated point but have higher returns. To reach these points, we propose a portfolio improvement approach, which is to optimize the portfolio from return maximization to push it to the orange area as shown in Figure~\ref{fig:illustration_interpolation}(b). Then, the portfolio interpolation is adopted to control risk, in which the interpolated point is expected to have a higher return with the same risk compared with the portfolio from the pure portfolio interpolation method. Before introducing our approach, we first present a proposition as follows:

\begin{proposition}
Assume we have a set of portfolios $\{\bm{b}_t^1, \bm{b}_t^2, \cdots, \bm{b}_t^i, $ $\cdots\}$ whose return rates are the same. We apply the portfolio interpolation method to control their risks to a given risk $\sigma_g$. The portfolio with the highest return after interpolation has the lowest $\gamma_t$.
\end{proposition}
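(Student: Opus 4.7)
The plan is to reduce the statement to a direct corollary of the linear-return formula already established in the proof of Proposition~\ref{prop:return}. First I would recall that for any portfolio $\bm{b}_t^i$ in the given collection, interpolating with the min-variance portfolio $\bm{b}_t^m$ using weight $\gamma_t^i$ yields
\[
r_t^{a,i} = (1-\gamma_t^i)\, r_t^{p,i} + \gamma_t^i\, r_t^m,
\]
which I would immediately rewrite as $r_t^{a,i} = r_t^{p,i} - \gamma_t^i (r_t^{p,i} - r_t^m)$ to isolate the dependence on the interpolation weight.

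Next I would invoke the hypothesis that every portfolio in the set shares a common return $r_t^p := r_t^{p,i}$ and note that $r_t^m$ is a single fixed quantity for the holding period. Substituting these, the expression collapses to
\[
r_t^{a,i} = r_t^p - \gamma_t^i (r_t^p - r_t^m),
\]
so $r_t^{a,i}$ depends on $i$ only through $\gamma_t^i$. Assuming the return-maximization model is effective in the same sense as in Proposition~\ref{prop:return} (i.e.\ $r_t^p \ge r_t^m$, and strictly so in the nondegenerate case), the coefficient $-(r_t^p - r_t^m)$ is non-positive, making $r_t^{a,i}$ a non-increasing linear function of $\gamma_t^i$, strictly decreasing whenever $r_t^p > r_t^m$. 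The largest post-interpolation return is therefore attained at the smallest $\gamma_t^i$, which is precisely the claim.

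The only bookkeeping I would need is to justify that each $\gamma_t^i$ is well-defined: this requires the target risk $\sigma_g$ to lie in the admissible interval $[\sigma_t^m, \sigma_t^i]$ for every $i$, which I would state as a standing assumption inherited from Proposition~\ref{prop:risk} and the portfolio-interpolation construction. The degenerate case $r_t^p = r_t^m$ is immediate because then every $r_t^{a,i}$ coincides and any $\gamma_t^i$ simultaneously realizes the maximum; I would dispose of it in a single line. Beyond these remarks there is no substantive obstacle — the proposition is essentially a one-line consequence of the linearity already proved, and the main writing task is simply to present it cleanly.
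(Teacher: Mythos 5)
Your argument is correct and follows exactly the route the paper intends: the paper's own ``proof'' simply states that the claim is a direct consequence of Proposition~\ref{prop:return} and omits all details, and your write-up is precisely the fleshing-out of that deduction via the linearity $r_t^{a} = (1-\gamma_t)r_t^{p} + \gamma_t r_t^{m}$. Your additional remarks on well-definedness of each $\gamma_t^i$ and the degenerate case $r_t^p = r_t^m$ are careful touches the paper does not bother with, but they do not change the approach.
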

\begin{proof}
Since this proposition can be directly deduced from proposition~\ref{prop:return}, the details of the proof are omitted. 
\end{proof}

\begin{table*}[]
\footnotesize
\caption{Performance comparison between MILLION and competitors on three real-world datasets.}
\vspace{-0.3cm}
\label{tab:return_maximization}
\begin{tabular}{c|ccccc|ccccc|ccccc}
\hline
\multicolumn{1}{c|}{Methods} & \multicolumn{5}{c|}{DOW30}                  & \multicolumn{5}{c|}{NAS100}                 & \multicolumn{5}{c}{Crypto10}                  \\ \cline{2-16} 
\multicolumn{1}{c|}{}        & APR$\uparrow$     & AVOL$\downarrow$     & ASR$\uparrow$     & ACR$\uparrow$ & MDD$\uparrow$     & APR$\uparrow$     & AVOL$\downarrow$     & ASR$\uparrow$     & ACR$\uparrow$ & MDD$\uparrow$     & APR$\uparrow$      & AVOL$\downarrow$     & ASR$\uparrow$     & ACR$\uparrow$  & MDD$\uparrow$     \\ \hline
Market & 0.2182 & {\color{black}{0.1133}} & 1.7994 & 3.6065 & -0.0605 & 0.3566 & 0.3704 & 1.0101 & 1.1099 & -0.3213 & -0.0494 & 0.5612 & 0.1922 & -0.0656 & -0.7530 \\ 
MVM & 0.0941 & 0.1138 & 0.8472 & 1.6677 & \textbf{-0.0565} & 0.3907 & \textbf{0.2914} & 1.2796 & 1.3885 & -0.2814 & 0.1406 & \textbf{0.4560} & 0.5180 & 0.1939 & -0.7252 \\ 
DT & 0.0948 & 0.1229 & 0.7979 & 1.1003 & -0.0861 & 0.9011 & 0.4100 & \underline{1.7731} & \underline{4.4232} & \textbf{-0.2037} & -0.1786 & 0.6937 & 0.0627 & -0.2116 & -0.8438 \\ 
LR & 0.2125 & 0.1416 & 1.4318 & 2.5832 & -0.0823 & 0.7128 & 0.4149 & 1.5057 & 2.2923 & -0.3109 & -0.1489 & 0.6744 & 0.0757 & -0.1774 & -0.8395 \\ 
RF & 0.2326 & 0.1562 & 1.4173 & 2.6315 & -0.0884 & 0.5249 & 0.3772 & 1.3082 & 1.7548 & -0.2991 & -0.1698 & 0.7046 & 0.0830 & -0.2175 & -0.7806 \\ 
SVM & 0.0047 & \textbf{0.1105} & 0.0980 & 0.0545 & -0.0871 & 0.8349 & \underline{0.3617} & 1.6598 & 3.1439 & \underline{-0.2656} & 0.0025 & 0.7448 & 0.3526 & 0.0028 & -0.8955 \\ 
LSTM-PTO & 0.1808 & 0.1194 & 1.4516 & 2.6715 & -0.0677 & -0.4360 & 0.9906 & -0.0831 & -0.7148 & -0.6099 & 0.0334 & 0.6555 & 0.3759 & 0.0432 & -0.7730 \\ 
LSTMHAM-PTO & 0.1044 & 0.1143 & 0.9254 & 1.8236 & -0.0572 & -0.4640 & 0.8107 & -0.3547 & -0.7971 & -0.5821 & -0.0297 & 0.7656 & 0.3403 & -0.0436 & \underline{-0.6799} \\ 
FinRL-A2C & 0.2186 & 0.1134 & 1.8001 & 3.5902 & -0.0609 & 0.3630 & 0.3718 & 1.0203 & 1.1285 & -0.3217 & -0.0451 & 0.5605 & 0.2000 & -0.0602 & -0.7488 \\ 
FinRL-PPO & 0.2379 & {\color{black}{\underline{0.1128}}} & 1.9492 & 4.1736 & \underline{-0.0570} & 0.3572 & 0.3686 & 1.0143 & 1.1170 & -0.3198 & -0.0835 & \underline{0.5521} & 0.1204 & -0.1095 & -0.7624 \\ 
LSTMHAM-S                    & 0.2731 & 0.1212 & 2.0532 & 3.8952 & -0.0701 & 0.4442 & 0.5038 & 0.9794 & 1.5705 & -0.2828 & 0.1837  & 0.6858 & 0.5684 & 0.2525  & -0.7278 \\ 
LSTMHAM-C                    & 0.2915 & 0.1198 & 2.1949 & 4.1515 & -0.0702 & 1.2153 & 0.8541 & 1.3296 & 2.7640 & -0.4396 & 0.0838  & 0.6444 & 0.4357 & 0.1138  & -0.7365 \\ 
LSTMHAM-M                    & 0.3214 & 0.1279 & 2.2432 & 4.2906 & -0.0749 & 0.1846 & 0.4949 & 0.5888 & 0.5421 & -0.3404 & 0.0532  & 0.5797 & 0.3790 & 0.0714  & -0.7449 \\ \hline
\textbf{MILLION-S}                    & \textbf{0.3936} & 0.1417 & \underline{2.4132} & \underline{4.6153} & -0.0806 & \textbf{1.9763} & 0.7243 & \textbf{1.8528} & \textbf{6.9817} & -0.2830 & \textbf{0.3871}  & 0.7405 & \textbf{0.7914} & \textbf{0.5821}  & \textbf{-0.6650} \\
\textbf{MILLION-C}                    & \underline{0.3857} & 0.2103 & 1.6572 & 3.8870 & -0.0992 & \underline{1.3172} & 0.8578 & 1.3759 & 3.1930 & -0.4125  & \underline{0.1996}  & 0.7279 & \underline{0.5888} & \underline{0.2786}  & -0.7165 \\ 
\textbf{MILLION-M}                    & 0.3389 & 0.1288 & \textbf{2.5306}  & \textbf{4.6412} & -0.0736 & 1.2206 & 0.5638 & 1.6917 & 4.0832 & -0.2989 & 0.0581  & 0.5797 & 0.3870 & 0.0783  & -0.7440 \\ \hline
\end{tabular}
\vspace{-0.2cm}
\end{table*}

From this proposition, the portfolio improvement approach is to minimize the interpolation weight $\gamma_t$ for a user-specified risk level $\sigma_g$. Specifically, we solve Equation~\ref{eq:risk_control} for a given $\sigma_g$ to get $\gamma_t$, in which $\gamma_t$ can be presented as a function of $\sigma_g$, $\Sigma_t$, $\bm{b}_t$, and $\bm{b}_t^m$. Since $\gamma_t$ is the root of a quadratic equation with one unknown, the function $f$ is differentiable. Thus, we can directly minimize $\gamma_t$ to optimize $\bm{v}_t$ (i.e., the portfolio before $\mathit{Softmax}$, cf. Equation~\ref{eq:portfolio_construction}) to compel the portfolio constraint (i.e., {\color{black}{$\sum_{i=1}^{N} b^i = 1$}}). 
\begin{equation}
\label{eq:imp}
\footnotesize
\mathcal{L}_{imp} = \sum_{t=1}^{T} \gamma_t
\end{equation}

Moreover, since the goal of personalized risk control is to fit different investors' requests, it is time-consuming to finetune the whole model for each investor. Therefore, the model's parameters are fixed at the portfolio improvement, which can save the amount of calculation. 

However, only minimizing $\gamma_t$ may cause unexpected damage to the portfolio return as shown in Figure~\ref{fig:illustration_interpolation}(b), where the pink area will produce a smaller $\gamma_t$ but also a smaller portfolio return. To deal with this situation, we incorporate a return objective when optimizing. The return-added objective can be formulated as follows:
\begin{equation}
\label{eq:imp_return}
\footnotesize
\mathcal{L}_{imp+ret} = -\zeta \prod_{t=1}^{T} (\bm{b}_{t}^T \hat{\bm{r}_t}) + \sum_{t=1}^{N} \gamma_t
\end{equation}

\noindent where $\hat{\bm{r}_t}$ is the predicted assets' return rate, and $\zeta$ is a weight to balance the loss of the two components. It should be noted that the predicted return rate is not constrained to obtained from our predictor in the return-related maximization phase but can be accessed from any predictors, such as LSTM or more advanced models, and training separately. 

\section{Experiments}



\noindent \textbf{Dataset.} 
The data from both the U.S. stock market and cryptocurrency market is obtained using FinRL\footnote{https://github.com/AI4Finance-Foundation/FinRL} and CCXT\footnote{https://github.com/ccxt/ccxt}, and then preprocessed with the FinRL library to extract the indicators detailed in section~\ref{sec:indicators}. In the case of the stock market, we specifically focus on stocks from two prominent U.S. stock indexes, namely NAS100 and DOW30. For the cryptocurrency market, we select the top 10 cryptocurrencies by market occupancy.
Table~\ref{tab:statistic} provides an overview of the statistics for each dataset. In the case of NAS100 and DOW30 datasets, the data from the last year of the training period is utilized as a validation set for conducting model selection.



\noindent \textbf{Competitors.} We compare three types of competitors. (1) The predict-then-optimize methods are: Decision Tree (DT)~\cite{Breiman2017PointsOS}, Linear Regression (LR), Random Forest (RF)~\cite{Breiman2001RandomF}, Support Vector Machine (SVM)~\cite{Platt1999ProbabilisticOF}, LSTM-PTO~\cite{Hochreiter1997LongSM}, and LSTMHAM-PTO, {\color{black}{where they focus on predicting the assets' return rate of next holding period and then solve the classical mean-variance problem through maximizing Sharpe ratio. It should be noted that LSTMHAM-PTO shares the same model architecture with ours.}} (2) The RL-based methods: A2C and PPO, in which we adopt a RL-based financial lib, i.e., FinRL~\cite{FinRLMeta,Liu2021FinRLDR}, to implement. (3) {\color{black}{The DL-based methods: LSTMHAM-S, LSTMHAM-C, and LSTMHAM-M, which keep the same model architecture as ours but with only one objective to optimize, where -S, -C, and -M indicates the model is trained with $MaxSharpe$, $MaxCum$, and $MinDown$, respectively.}} We also include two classical methods, i.e., Market and min-variance model (MVM), as competitors. 

\noindent \textbf{Evaluation Metrics.} We include six commonly-used metrics~\cite{Ye2020ReinforcementLearningBP,Zhang2020DeepLF,Cheng2023AGF} in portfolio management to evaluate the proposed framework, i.e., Cumulative Wealth (CW), Annualized Percentage Rate (APR), Annualized Volatility (AVOL), Annualized Sharpe Ratio (ASR), Maximum DrawDown (MDD), and Annualized Calmar Ratio (ACR). 

\noindent \textbf{Parameter Settings.} 
The holding period is fixed at one day, with a window size ($w$) of 20 for temporal modeling. Transaction cost ($c_t$) in Equation~\ref{eq:maxsharpe} is set to 0, while the threshold ($\delta_d$) in Equation~\ref{eq:mindown} is set to 0.005. 
Optimization is conducted using the $\mathit{AdamW}$ optimizer with a learning rate of 1e-4. The neural networks' hidden size ($d$) is set to 64 for the DOW30 and Crypto10 datasets and 128 for the NAS100 dataset. In subsequent sections, we independently assess the efficacy of our proposed components: return-related maximization and risk control.


\subsection{Return-Related Maximization Performance}

\noindent \textbf{Overall.} 
Table~\ref{tab:return_maximization} displays the backtesting outcomes of various models across three datasets, while Figure~\ref{fig:all_cumulative} illustrates cumulative wealth curves. It's evident that among the predict-then-optimize approaches, no single model outperforms others consistently in terms of APR or ASR. This suggests that different market conditions favor different prediction models, making it challenging to devise a universal model applicable to all scenarios. Additionally, the performance of these methods varies significantly across datasets, indicating the sensitivity of portfolio construction to predicted asset returns. Furthermore, Market and MVM exhibit relatively lower AVOL and MDD yet competitive APR compared to predict-then-optimize methods, underscoring the validity of diversified investment strategies. Notably, RL-based methods excel in DOW30 but not in other datasets. Our experiments reveal that RL-based methods often demand extensive interactions to learn effective policies on training data but struggle with generalization to unseen test data, highlighting training efficiency and generalizability issues. Moreover, LSTMHAM with diverse objectives generally outperforms other benchmarks in most cases. {\color{black}{Ultimately, our proposed MILLION framework consistently achieves the best performance in terms of return-related metrics such as APR, ASR, and ACR.}}

\noindent \textbf{Effect of different frameworks.} Comparing the same model architecture with different portfolio construction methods (e.g., LSTMHAM-PTO, FinRL-PPO, LSTMHAM-S), we can find that the DL-based framework always achieves better while the predict-then-optimize framework is comparable with RL-based framework. 

\noindent \textbf{Effect of different objectives.} 
Our investigation into the impact of different optimization objectives uncovers distinct advantages associated with each. {\color{black}{For instance, the $MaxSharpe$ objective typically minimizes MDD to a greater extent compared to other objectives. On the other hand, $MaxCum$ tends to achieve superior APR, while $MinDown$ consistently minimizes AVOL across most scenarios.}}

\noindent \textbf{Effect of multiple objectives.} Compared with single objective DL-based methods (e.g., LSTMHAM-S), our proposed framework MILLION (e.g., MILLION-S) can always perform better in terms of return-related metrics, which shows the effectiveness of using multiple objectives to train the DL-based models. 

\begin{table}[t]
\footnotesize   
\caption{Dataset Statistics}
\vspace{-0.3cm}
\label{tab:statistic}
\begin{tabular}{cccc}
\hline
\textbf{Dataset}  & \textbf{Training Period}          & \textbf{Testing Period}           & \textbf{\#Assets} \\ \hline
NAS100   & 2009/01/01-2020/02/01 & 2020/02/01-2021/01/01 & 78     \\ \hline
DOW30    & 2009/01/01-2019/02/01 & 2019/02/01-2020/01/01 & 28     \\ \hline
Crypto10 & 2018/08/02-2021/07/01 & 2021/07/01-2023/10/32 & 10     \\ \hline
\end{tabular}
\vspace{-0.55cm}
\end{table}

\begin{figure*}[t]
\centering
\begin{minipage}[l]{0.64\linewidth}
\centering
\includegraphics[width=\linewidth]{./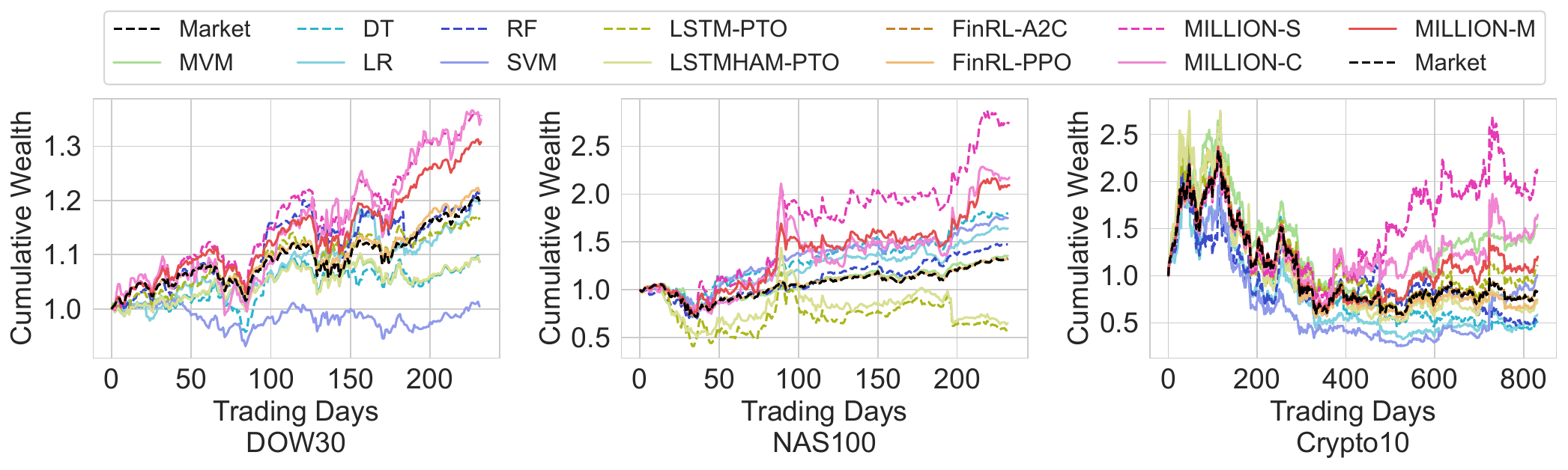}
\caption{{\color{black}{Backtest results in terms of CW on three datasets}}}
\label{fig:all_cumulative}
\end{minipage}
\begin{minipage}[c]{0.3\linewidth}
\centering
\includegraphics[width=0.89\linewidth]{./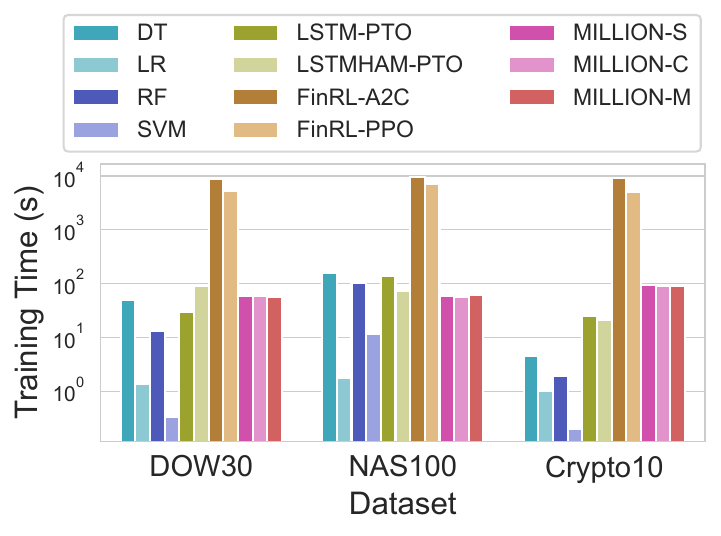}
\caption{{\color{black}{Training efficiency}}}
\label{fig:efficiency}
\end{minipage}
\end{figure*}

\subsection{Risk Control Performance}
We conduct a comparative analysis of the proposed risk control methodologies alongside predict-then-optimize strategies, as illustrated in Figure~\ref{fig:risk_control}. 
{\color{black}{The horizontal axis represents user-defined risk levels, while the vertical axis denotes the corresponding backtesting metrics. "Interpolation" and "Improvement" refer to portfolio interpolation (cf. Equation~\ref{eq:interpolation}) and portfolio improvement (cf. Equation~\ref{eq:imp}) methodologies, respectively. "Improvement-LSTM" and "Improvement-SVM" denote portfolio enhancement utilizing Equation~\ref{eq:imp_return}, where return rates are forecasted using LSTM and SVM algorithms, respectively.}}


From Figure~\ref{fig:risk_control}, it is evident that predict-then-optimize methodologies yield varied outcomes when risk is constrained to specific values across different datasets. Notably, for LR, ASR in the DOW30 dataset demonstrates a gradual increase with risk levels ranging from 1e-5 to 1e-4, while in NAS100, it exhibits an inverse trend. Furthermore, the construction of risk-constrained portfolios may outperform those optimized solely for maximizing the Sharpe ratio, as evidenced by Table~\ref{tab:return_maximization}. For example, in the DOW30 dataset, constraining LR's risk to 5e-5 results in an ASR exceeding 1.5, surpassing the 1.43 ASR achieved through maximizing the Sharpe ratio alone (cf. Table~\ref{tab:return_maximization}). Additionally, our proposed risk control methodologies consistently demonstrate an anticipated pattern: as specified risk increases, both realized ASR and AVOL simultaneously increase. Moreover, it is noteworthy that Improvement consistently outperforms Interpolation, even in scenarios where there are no predicted return rates to guide optimization. Incorporating predicted return rates into portfolio enhancement tends to yield superior performance by guiding the portfolio towards regions with higher return rates.


\subsection{Efficiency}
In Figure~\ref{fig:efficiency}, we present the training time required for each algorithm to reach convergence. Traditional machine learning methods, such as DT and RF, are executed on the CPU, while other models are run on an A100 GPU. Notably, MILLION demonstrates comparable training times to predict-then-optimize algorithms and outperforms RL-based methods in terms of speed. This is attributed to the stability of training in DL compared to RL, which typically requires over 100K steps to converge. Additionally, the interaction between RL agents and the environment is slower, further hampering the training efficiency of RL-based methods.

\begin{figure}[t]
\centering
\includegraphics[width=0.9\linewidth]{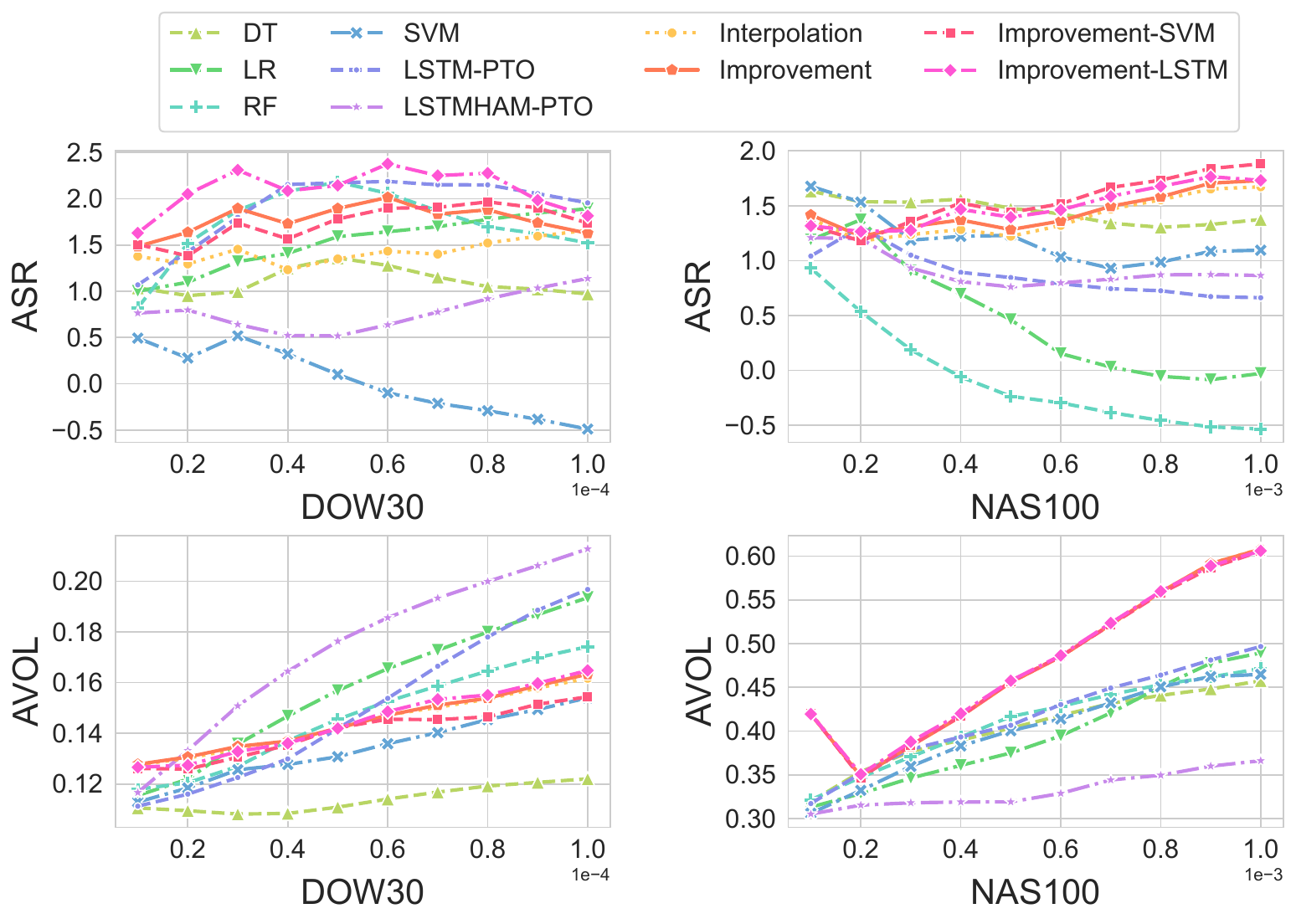}
\vspace{-0.3cm}
\caption{Comparison of the proposed risk control methods with predict-then-optimize approaches}
\label{fig:}
\label{fig:risk_control}
\vspace{-0.55cm}
\end{figure}

\subsection{Case Study}
\label{sec:case_study}
To further understand the effectiveness of the proposed risk control approaches, we conduct a case study on DOW30 dataset. 

\noindent \textbf{Effect of uniform interpolation.} 
In Figure~\ref{fig:case_study}(a), the performance of uniform portfolio interpolation is depicted, showcasing the variation of $\gamma_t$ from 0 to 1 with a step size of 0.1. The bottom curve represents the cumulative wealth attained from min-variance optimization, where $\gamma_t$ is fixed at 1 for all $t$. Notably, as $\gamma_t$ increases, both CW and AVOL gradually decrease. Furthermore, CW consistently remains bounded by the cumulative wealth obtained from min-variance optimization (MVM), underscoring the empirical validity of Proposition~\ref{prop:return}.


\noindent \textbf{Effect of portfolio improvement.} 
Apart from the portfolio interpolation method, we demonstrate the effectiveness of portfolio improvement in Figure~\ref{fig:case_study}(b). Here, we maintain the portfolio risk at 5e-5 and iterate Equation~(\ref{eq:imp}) optimization 30 times. Each line in the figure represents a single iteration, with the green line depicting the outcome of portfolio interpolation and the red line representing the final result of portfolio improvement. Notably, all lines exhibit identical risk levels. From this visualization, we observe a notable enhancement in the CW value, rising from approximately 1.2 to 1.35 by the conclusion of the test period, underscoring the effectiveness of the proposed portfolio improvement. Additionally, we examine the performance of portfolio improvement with varying numbers of optimization epochs (refer to Equation~\ref{eq:imp} and Equation~\ref{eq:imp_return}), showcased in Figure~\ref{fig:portfolio_improvement}, where each data point corresponds to the same risk level. Incorporating the predicted return rate into the portfolio improvement yields a final portfolio with superior returns and reduced risk, despite potential inaccuracies in the predicted return rate. For instance, the predict-then-optimize framework yields a 0.1808 APR, as demonstrated in Table~\ref{tab:return_maximization}. This observation underscores the robustness of our proposed portfolio improvement to the accuracy of predicted return rates, owing to the provision of a solid initial portfolio through the return-related maximization model.

\begin{figure}[t]
\centering
\subfigure[Interpolation]{
\includegraphics[width=0.47\linewidth]{./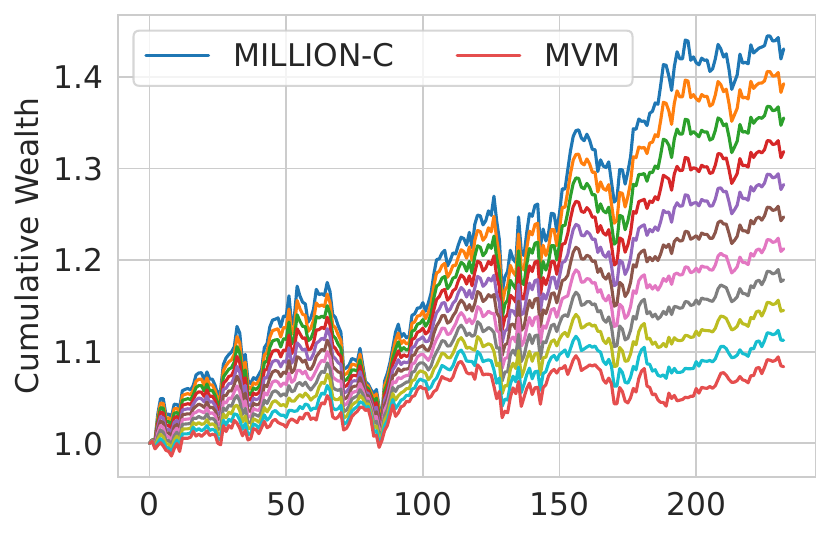}
}
\subfigure[Improvement]{
\includegraphics[width=0.45\linewidth]{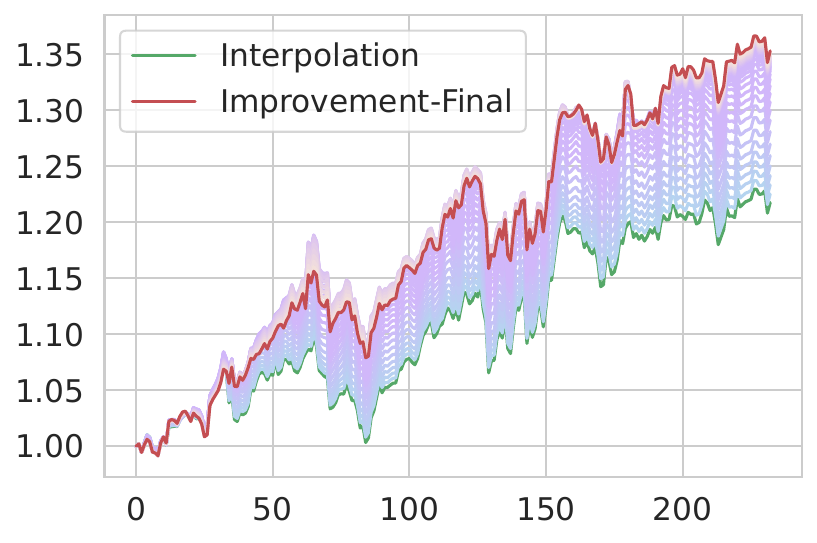}
}
\vspace{-0.1cm}
\caption{{\color{black}{The effect of the proposed risk control approaches in terms of CW on DOW30 dataset}}}
\label{fig:case_study}
\end{figure}

\begin{figure}[t]
\centering
\includegraphics[width=0.86\linewidth]{./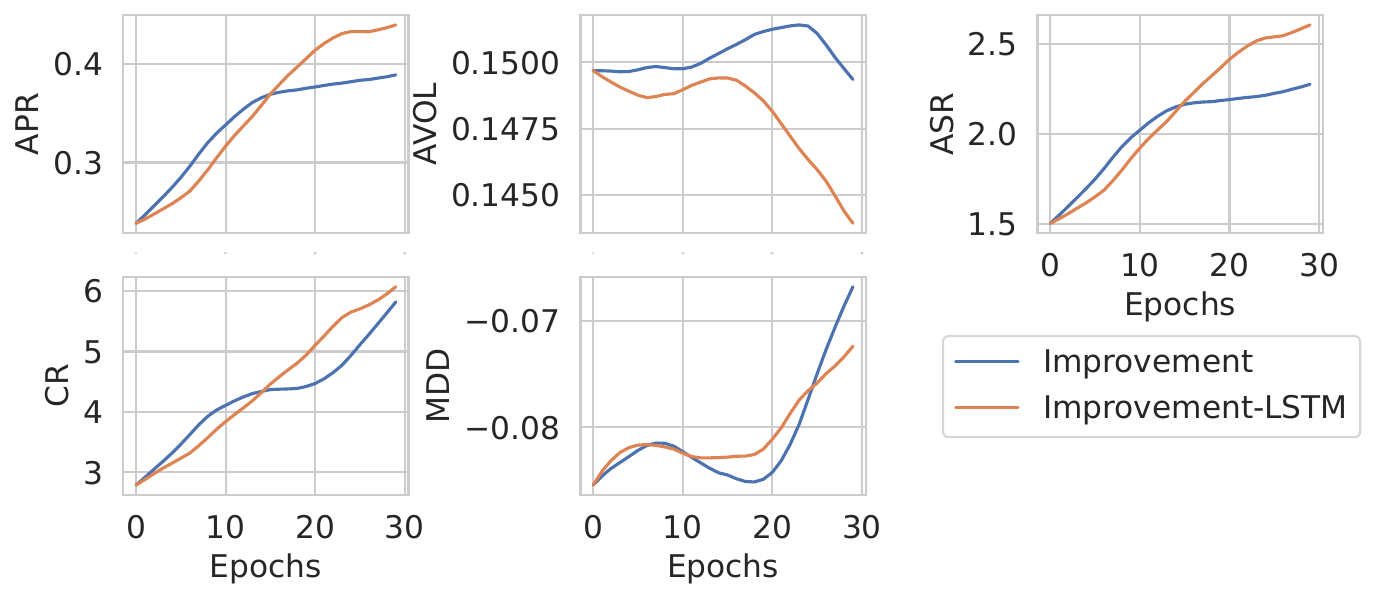}
\vspace{-0.1cm}
\caption{{\color{black}{The effect of portfolio improvement on metrics}}}
\label{fig:portfolio_improvement}
\end{figure}



\section{Related Works}
In this section, we survey the related studies on predict-then-optimize and direct portfolio optimization.

\noindent \textbf{Predict-then-optimize Portfolio Optimization.} Mean-variance model~\cite{Markowitz1952PortfolioS} is a classical method to construct a portfolio through solving a combinational optimization problem as follows:
\begin{equation}
\footnotesize
\begin{aligned}
\color{black}{
\max \sum_{i=1}^{N} b_t^i r_t^i \quad \text{s.t.} \quad \bm{b}_t^T\Sigma\bm{b}_t \le \sigma_g , \quad \sum_{i=1}^{N} b^i=1, \quad 1\ge b^i\ge 0, \forall_i
}
\end{aligned}
\end{equation}

\noindent in which the return rate of each asset, i.e., $\bm{r}_t$, and its covariance, i.e., $\bm{\Sigma}_t$, are supposed to be already known or simply estimated using the sample mean and sample covariance of historical assets' return rate. However, since the market is dynamic and volatile, the simple estimation may not reflect the future market. To construct a more effective portfolio, plenty of researchers are devoted to developing more powerful return rate prediction models. For example, Li et al.~\cite{Li2011ConfidenceWM} propose Confidence Weighted Mean Reversion (CWMR) to estimate the next price relative as the inverse of the last of it. Huang et al.~\cite{RobustPort} exploit the reversion phenomenon via robust $L_1$-median estimators to predict the next price relative. With the fast development of machine learning, a number of advanced models are proposed. For example, 
Li et al.~\cite{Li2021AME} propose a multimodal event-driven LSTM model using online news to predict stocks' return rates. Besides improving the accuracy of return rate prediction, there are lots of efforts in formulating a different portfolio optimization problem. For example, 
Rockafellar et al.~\cite{Rockafellar2000OptimizationOC} adopt conditional value-at-risk (CVaR) as the risk metric for portfolio construction. Furthermore, Lai et al.~\cite{Lai2022MultitrendCV} propose a multi-trend CVaR as the risk metric to optimize the constructed portfolio. Despite the methods in this line that can easily control risk to a user-specified risk level, i.e., $\sigma_g$, their effectiveness will be strongly influenced by the estimation of assets' return rate. Unfortunately, the accurate return rate prediction is difficult or even impossible, which may be due to the incorrectness of the model construction or the less predictability of the market states.

\noindent \textbf{Direct Portfolio Optimization.} Instead of predicting the return rate of assets, RL-based~\cite{Lien2023ContrastiveLA,Sun2021DeepScalperAR,Yang2022AST,Xu2020RelationAwareTF,Wang2021DeepTraderAD,Wang2020CommissionFI,Liu2020AdaptiveQT} and DL-based approaches aim to directly output a portfolio for each holding period, which usually can achieve higher investment profits compared 
with methods in the predict-then-optimize framework. In RL-based methods, they focus on learning a policy to map the market state to an action to maximize the discounted cumulative reward, in which the action is defined as the portfolio weight $\bm{b}_t$. For example, Liu et al.~\cite{FinRLMeta,Liu2021FinRLDR} develop a general deep RL-based framework to enable investors to automate trading, in which investors can flexibly incorporate their prior knowledge. 
Lien et al.~\cite{Lien2023ContrastiveLA} adopt contrastive learning technique and reward smoothing to indicate the stock relationships and maximize a long-term profit, respectively. 
For the DL-based methods~\cite{Zhang2020DeepLF,Zhang2021AUE}, the only difference to RL-based methods is the way of optimization, where RL-based methods optimize the parameters through gradients backward from a surrogate loss while DL-based methods directly optimize the portfolio objective through gradient ascent since this objective is differentiable. 
Zhang et al.~\cite{Zhang2020DeepLF} propose to directly optimize Sharpe ratio in different model architectures such as MLP, CNN, and LSTM. 
Although optimizing Sharpe ratio can control risk to some extent, they are hard to achieve fine-grained risk control and impossible to fit investors' personality in terms of risk preference with only one constructed portfolio. Moreover, the strong ability of representation of neural networks will degenerate the generalization of the trained model to be applied to the future market. 

\section{Conclusion}
In this paper, we propose a general multi-objective framework with controllable risk for portfolio management called MILLION, in which we decompose the portfolio management into two main phases, i.e., return-related maximization and risk control. In the first phase, we follow the DL-based portfolio framework and demonstrate that the multi-objective design is useful in improving return-related metrics, such as APR, and ASR, through backtesting on three real-world datasets. In the risk control phase, we propose two approaches to adjust the risk of the constructed portfolio to fit different investors' preferences in terms of risk-taking. Compared with methods in the predict-then-optimize framework, MILLION performs better in terms of ASR under the same risk level. 

\begin{acks}
This work is partially supported by NSFC (No. 62472068),  Shenzhen Municipal Science and Technology R\&D Funding Basic Research Program (JCYJ20210324133607021), and Municipal Government of Quzhou under Grant (No. 2023D044), and Key Laboratory of Data Intelligence and Cognitive Computing, Longhua District, Shenzhen.
\end{acks}

\balance
\bibliographystyle{ACM-Reference-Format}
\bibliography{sample}


\begin{thebibliography}{51}


\ifx \showCODEN    \undefined \def \showCODEN     #1{\unskip}     \fi
\ifx \showDOI      \undefined \def \showDOI       #1{#1}\fi
\ifx \showISBNx    \undefined \def \showISBNx     #1{\unskip}     \fi
\ifx \showISBNxiii \undefined \def \showISBNxiii  #1{\unskip}     \fi
\ifx \showISSN     \undefined \def \showISSN      #1{\unskip}     \fi
\ifx \showLCCN     \undefined \def \showLCCN      #1{\unskip}     \fi
\ifx \shownote     \undefined \def \shownote      #1{#1}          \fi
\ifx \showarticletitle \undefined \def \showarticletitle #1{#1}   \fi
\ifx \showURL      \undefined \def \showURL       {\relax}        \fi
\providecommand\bibfield[2]{#2}
\providecommand\bibinfo[2]{#2}
\providecommand\natexlab[1]{#1}
\providecommand\showeprint[2][]{arXiv:#2}

\bibitem[\protect\citeauthoryear{Bai, Zhao, Zhou, Li, Gao, Tang, Dai, and Dong}{Bai et~al\mbox{.}}{2023}]%
        {Bai2023MercuryAD}
\bibfield{author}{\bibinfo{person}{Zeng-Liang Bai}, \bibinfo{person}{Ya-Ning Zhao}, \bibinfo{person}{Zhigang Zhou}, \bibinfo{person}{Wen-Qin Li}, \bibinfo{person}{Yang Gao}, \bibinfo{person}{Ying Tang}, \bibinfo{person}{Long-Zheng Dai}, {and} \bibinfo{person}{Yi-You Dong}.} \bibinfo{year}{2023}\natexlab{}.
\newblock \showarticletitle{Mercury: A Deep Reinforcement Learning-Based Investment Portfolio Strategy for Risk-Return Balance}.
\newblock \bibinfo{journal}{\emph{IEEE Access}}  \bibinfo{volume}{11} (\bibinfo{year}{2023}), \bibinfo{pages}{78353--78362}.
\newblock


\bibitem[\protect\citeauthoryear{Breiman}{Breiman}{2001}]%
        {Breiman2001RandomF}
\bibfield{author}{\bibinfo{person}{L. Breiman}.} \bibinfo{year}{2001}\natexlab{}.
\newblock \showarticletitle{Random Forests}.
\newblock \bibinfo{journal}{\emph{Machine Learning}}  \bibinfo{volume}{45} (\bibinfo{year}{2001}), \bibinfo{pages}{5--32}.
\newblock


\bibitem[\protect\citeauthoryear{Breiman and Olshen}{Breiman and Olshen}{2017}]%
        {Breiman2017PointsOS}
\bibfield{author}{\bibinfo{person}{L. Breiman} {and} \bibinfo{person}{Richard~A. Olshen}.} \bibinfo{year}{2017}\natexlab{}.
\newblock \showarticletitle{Points of Significance: Classification and regression trees}.
\newblock \bibinfo{journal}{\emph{Nature Methods}}  \bibinfo{volume}{14} (\bibinfo{year}{2017}), \bibinfo{pages}{757--758}.
\newblock


\bibitem[\protect\citeauthoryear{Chen, Zhang, Mehlawat, and Jia}{Chen et~al\mbox{.}}{2021}]%
        {Chen2021MeanvariancePO}
\bibfield{author}{\bibinfo{person}{W. Chen}, \bibinfo{person}{Haoyu Zhang}, \bibinfo{person}{Mukesh~Kumar Mehlawat}, {and} \bibinfo{person}{Lifen Jia}.} \bibinfo{year}{2021}\natexlab{}.
\newblock \showarticletitle{Mean-variance portfolio optimization using machine learning-based stock price prediction}.
\newblock \bibinfo{journal}{\emph{Appl. Soft Comput.}} (\bibinfo{year}{2021}).
\newblock


\bibitem[\protect\citeauthoryear{Cheng and Chen}{Cheng and Chen}{2023}]%
        {Cheng2023AGF}
\bibfield{author}{\bibinfo{person}{Tuoyuan Cheng} {and} \bibinfo{person}{Kan Chen}.} \bibinfo{year}{2023}\natexlab{}.
\newblock \showarticletitle{A General Framework for Portfolio Construction Based on Generative Models of Asset Returns}.
\newblock \bibinfo{journal}{\emph{The Journal of Finance and Data Science}} (\bibinfo{year}{2023}).
\newblock


\bibitem[\protect\citeauthoryear{Deng, Zhao, Chen, Liu, Xia, and Zheng}{Deng et~al\mbox{.}}{2024}]%
        {deng2024learning}
\bibfield{author}{\bibinfo{person}{Liwei Deng}, \bibinfo{person}{Yan Zhao}, \bibinfo{person}{Jin Chen}, \bibinfo{person}{Shuncheng Liu}, \bibinfo{person}{Yuyang Xia}, {and} \bibinfo{person}{Kai Zheng}.} \bibinfo{year}{2024}\natexlab{}.
\newblock \showarticletitle{Learning to Hash for Trajectory Similarity Computation and Search}. In \bibinfo{booktitle}{\emph{2024 IEEE 40th International Conference on Data Engineering (ICDE)}}. IEEE, \bibinfo{pages}{4491--4503}.
\newblock


\bibitem[\protect\citeauthoryear{Deng, Zhang, Zhang, Chen, Pan, and Chen}{Deng et~al\mbox{.}}{2019}]%
        {Deng2019KnowledgeDrivenST}
\bibfield{author}{\bibinfo{person}{Shumin Deng}, \bibinfo{person}{Ningyu Zhang}, \bibinfo{person}{Wen Zhang}, \bibinfo{person}{Jiaoyan Chen}, \bibinfo{person}{Jeff~Z. Pan}, {and} \bibinfo{person}{Huajun Chen}.} \bibinfo{year}{2019}\natexlab{}.
\newblock \showarticletitle{Knowledge-Driven Stock Trend Prediction and Explanation via Temporal Convolutional Network}.
\newblock \bibinfo{journal}{\emph{Companion Proceedings of The 2019 World Wide Web Conference}} (\bibinfo{year}{2019}).
\newblock


\bibitem[\protect\citeauthoryear{Ding, Zhang, Liu, and Duan}{Ding et~al\mbox{.}}{2015}]%
        {Ding2015DeepLF}
\bibfield{author}{\bibinfo{person}{Xiao Ding}, \bibinfo{person}{Yue Zhang}, \bibinfo{person}{Ting Liu}, {and} \bibinfo{person}{Junwen Duan}.} \bibinfo{year}{2015}\natexlab{}.
\newblock \showarticletitle{Deep Learning for Event-Driven Stock Prediction}. In \bibinfo{booktitle}{\emph{International Joint Conference on Artificial Intelligence}}.
\newblock


\bibitem[\protect\citeauthoryear{Feng, Chen, He, Ding, Sun, and Chua}{Feng et~al\mbox{.}}{2018a}]%
        {Feng2018EnhancingSM}
\bibfield{author}{\bibinfo{person}{Fuli Feng}, \bibinfo{person}{Huimin Chen}, \bibinfo{person}{Xiangnan He}, \bibinfo{person}{Ji Ding}, \bibinfo{person}{Maosong Sun}, {and} \bibinfo{person}{Tat-Seng Chua}.} \bibinfo{year}{2018}\natexlab{a}.
\newblock \showarticletitle{Enhancing Stock Movement Prediction with Adversarial Training}. In \bibinfo{booktitle}{\emph{International Joint Conference on Artificial Intelligence}}.
\newblock


\bibitem[\protect\citeauthoryear{Feng, He, Wang, Luo, Liu, and Chua}{Feng et~al\mbox{.}}{2018b}]%
        {Feng2018TemporalRR}
\bibfield{author}{\bibinfo{person}{Fuli Feng}, \bibinfo{person}{Xiangnan He}, \bibinfo{person}{Xiang Wang}, \bibinfo{person}{Cheng Luo}, \bibinfo{person}{Yiqun Liu}, {and} \bibinfo{person}{Tat-Seng Chua}.} \bibinfo{year}{2018}\natexlab{b}.
\newblock \showarticletitle{Temporal Relational Ranking for Stock Prediction}.
\newblock \bibinfo{journal}{\emph{ACM Transactions on Information Systems (TOIS)}}  \bibinfo{volume}{37} (\bibinfo{year}{2018}), \bibinfo{pages}{1 -- 30}.
\newblock


\bibitem[\protect\citeauthoryear{Gao, Ying, Xu, Wang, Zhang, and Li}{Gao et~al\mbox{.}}{2021}]%
        {Gao2021GraphBasedSR}
\bibfield{author}{\bibinfo{person}{Jianliang Gao}, \bibinfo{person}{Xiaoting Ying}, \bibinfo{person}{Cong Xu}, \bibinfo{person}{Jianxin Wang}, \bibinfo{person}{Shichao Zhang}, {and} \bibinfo{person}{Zhao Li}.} \bibinfo{year}{2021}\natexlab{}.
\newblock \showarticletitle{Graph-Based Stock Recommendation by Time-Aware Relational Attention Network}.
\newblock \bibinfo{journal}{\emph{ACM Transactions on Knowledge Discovery from Data (TKDD)}}  \bibinfo{volume}{16} (\bibinfo{year}{2021}), \bibinfo{pages}{1 -- 21}.
\newblock


\bibitem[\protect\citeauthoryear{Geng, Ruan, Wang, Li, Wang, Chen, and Yan}{Geng et~al\mbox{.}}{2023}]%
        {Geng2023RethinkingAB}
\bibfield{author}{\bibinfo{person}{Haoyu Geng}, \bibinfo{person}{Hang Ruan}, \bibinfo{person}{Runzhong Wang}, \bibinfo{person}{Yang Li}, \bibinfo{person}{Yang Wang}, \bibinfo{person}{Lei Chen}, {and} \bibinfo{person}{Junchi Yan}.} \bibinfo{year}{2023}\natexlab{}.
\newblock \showarticletitle{Rethinking and Benchmarking Predict-then-Optimize Paradigm for Combinatorial Optimization Problems}.
\newblock \bibinfo{journal}{\emph{ArXiv}} (\bibinfo{year}{2023}).
\newblock


\bibitem[\protect\citeauthoryear{Han, Ding, Wang, Cheng, and Liang}{Han et~al\mbox{.}}{2023}]%
        {Han2023EfficientCS}
\bibfield{author}{\bibinfo{person}{Li Han}, \bibinfo{person}{Nan Ding}, \bibinfo{person}{Guoxuan Wang}, \bibinfo{person}{Da~Zhao Cheng}, {and} \bibinfo{person}{Yuqi Liang}.} \bibinfo{year}{2023}\natexlab{}.
\newblock \showarticletitle{Efficient Continuous Space Policy Optimization for High-frequency Trading}.
\newblock \bibinfo{journal}{\emph{Proceedings of the 29th ACM SIGKDD Conference on Knowledge Discovery and Data Mining}} (\bibinfo{year}{2023}).
\newblock


\bibitem[\protect\citeauthoryear{Hao, Zhang, and Zhang}{Hao et~al\mbox{.}}{2023}]%
        {Hao2023StockPM}
\bibfield{author}{\bibinfo{person}{Zheng Hao}, \bibinfo{person}{Haowei Zhang}, {and} \bibinfo{person}{Yipu Zhang}.} \bibinfo{year}{2023}\natexlab{}.
\newblock \showarticletitle{Stock Portfolio Management by Using Fuzzy Ensemble Deep Reinforcement Learning Algorithm}.
\newblock \bibinfo{journal}{\emph{Journal of Risk and Financial Management}} (\bibinfo{year}{2023}).
\newblock


\bibitem[\protect\citeauthoryear{Hochreiter and Schmidhuber}{Hochreiter and Schmidhuber}{1997}]%
        {Hochreiter1997LongSM}
\bibfield{author}{\bibinfo{person}{Sepp Hochreiter} {and} \bibinfo{person}{J{\"u}rgen Schmidhuber}.} \bibinfo{year}{1997}\natexlab{}.
\newblock \showarticletitle{Long Short-Term Memory}.
\newblock \bibinfo{journal}{\emph{Neural Computation}}  \bibinfo{volume}{9} (\bibinfo{year}{1997}), \bibinfo{pages}{1735--1780}.
\newblock


\bibitem[\protect\citeauthoryear{Huang, Zhou, Li, Hoi, and Zhou}{Huang et~al\mbox{.}}{2016}]%
        {RobustPort}
\bibfield{author}{\bibinfo{person}{Dingjiang Huang}, \bibinfo{person}{Junlong Zhou}, \bibinfo{person}{Bin Li}, \bibinfo{person}{Steven C.~H. Hoi}, {and} \bibinfo{person}{Shuigeng Zhou}.} \bibinfo{year}{2016}\natexlab{}.
\newblock \showarticletitle{Robust Median Reversion Strategy for Online Portfolio Selection}.
\newblock \bibinfo{journal}{\emph{TKDE}} \bibinfo{volume}{28}, \bibinfo{number}{9} (\bibinfo{year}{2016}), \bibinfo{pages}{2480--2493}.
\newblock


\bibitem[\protect\citeauthoryear{Jang and Seong}{Jang and Seong}{2023}]%
        {Jang2023DeepRL}
\bibfield{author}{\bibinfo{person}{Junkyu Jang} {and} \bibinfo{person}{Nohyoon Seong}.} \bibinfo{year}{2023}\natexlab{}.
\newblock \showarticletitle{Deep reinforcement learning for stock portfolio optimization by connecting with modern portfolio theory}.
\newblock \bibinfo{journal}{\emph{Expert Syst. Appl.}}  \bibinfo{volume}{218} (\bibinfo{year}{2023}), \bibinfo{pages}{119556}.
\newblock


\bibitem[\protect\citeauthoryear{Kendall, Gal, and Cipolla}{Kendall et~al\mbox{.}}{2017}]%
        {Kendall2017MultitaskLU}
\bibfield{author}{\bibinfo{person}{Alex Kendall}, \bibinfo{person}{Yarin Gal}, {and} \bibinfo{person}{Roberto Cipolla}.} \bibinfo{year}{2017}\natexlab{}.
\newblock \showarticletitle{Multi-task Learning Using Uncertainty to Weigh Losses for Scene Geometry and Semantics}.
\newblock \bibinfo{journal}{\emph{2018 IEEE/CVF Conference on Computer Vision and Pattern Recognition}} (\bibinfo{year}{2017}), \bibinfo{pages}{7482--7491}.
\newblock


\bibitem[\protect\citeauthoryear{Lai, Li, Wu, Guan, and Fang}{Lai et~al\mbox{.}}{[n.d.]}]%
        {Lai2022MultitrendCV}
\bibfield{author}{\bibinfo{person}{Zhaoyang Lai}, \bibinfo{person}{Cheng Li}, \bibinfo{person}{Xiaotian Wu}, \bibinfo{person}{Quanlong Guan}, {and} \bibinfo{person}{Liangda Fang}.} \bibinfo{year}{[n.d.]}\natexlab{}.
\newblock \showarticletitle{Multitrend Conditional Value at Risk for Portfolio Optimization.}
\newblock \bibinfo{journal}{\emph{IEEE transactions on neural networks and learning systems}} (\bibinfo{year}{[n.\,d.]}).
\newblock


\bibitem[\protect\citeauthoryear{Li, Hoi, Sahoo, and Liu}{Li et~al\mbox{.}}{2015}]%
        {Li2015MovingAR}
\bibfield{author}{\bibinfo{person}{Bin Li}, \bibinfo{person}{Steven C.~H. Hoi}, \bibinfo{person}{Doyen Sahoo}, {and} \bibinfo{person}{Zhiyong Liu}.} \bibinfo{year}{2015}\natexlab{}.
\newblock \showarticletitle{Moving average reversion strategy for on-line portfolio selection}.
\newblock \bibinfo{journal}{\emph{Artif. Intell.}}  \bibinfo{volume}{222} (\bibinfo{year}{2015}), \bibinfo{pages}{104--123}.
\newblock


\bibitem[\protect\citeauthoryear{Li, Hoi, Zhao, and Gopalkrishnan}{Li et~al\mbox{.}}{2011}]%
        {Li2011ConfidenceWM}
\bibfield{author}{\bibinfo{person}{Bin Li}, \bibinfo{person}{Steven C.~H. Hoi}, \bibinfo{person}{Peilin Zhao}, {and} \bibinfo{person}{Vivekanand Gopalkrishnan}.} \bibinfo{year}{2011}\natexlab{}.
\newblock \showarticletitle{Confidence Weighted Mean Reversion Strategy for Online Portfolio Selection}.
\newblock \bibinfo{journal}{\emph{ACM Trans. Knowl. Discov. Data}} (\bibinfo{year}{2011}).
\newblock


\bibitem[\protect\citeauthoryear{Li, Tan, Wang, and Chen}{Li et~al\mbox{.}}{2021}]%
        {Li2021AME}
\bibfield{author}{\bibinfo{person}{Qing Li}, \bibinfo{person}{Jinghua Tan}, \bibinfo{person}{Jun Wang}, {and} \bibinfo{person}{Hsinchun Chen}.} \bibinfo{year}{2021}\natexlab{}.
\newblock \showarticletitle{A Multimodal Event-Driven LSTM Model for Stock Prediction Using Online News}.
\newblock \bibinfo{journal}{\emph{IEEE Transactions on Knowledge and Data Engineering}}  \bibinfo{volume}{33} (\bibinfo{year}{2021}), \bibinfo{pages}{3323--3337}.
\newblock


\bibitem[\protect\citeauthoryear{Lien, kui Li, and Wang}{Lien et~al\mbox{.}}{2023}]%
        {Lien2023ContrastiveLA}
\bibfield{author}{\bibinfo{person}{Yun-Hsuan Lien}, \bibinfo{person}{Yuan kui Li}, {and} \bibinfo{person}{Yu-Shuen Wang}.} \bibinfo{year}{2023}\natexlab{}.
\newblock \showarticletitle{Contrastive Learning and Reward Smoothing for Deep Portfolio Management}. In \bibinfo{booktitle}{\emph{International Joint Conference on Artificial Intelligence}}.
\newblock


\bibitem[\protect\citeauthoryear{Liu, Xia, Rui, Gao, Yang, Zhu, Wang, Wang, and Guo}{Liu et~al\mbox{.}}{2022}]%
        {FinRLMeta}
\bibfield{author}{\bibinfo{person}{Xiao{-}Yang Liu}, \bibinfo{person}{Ziyi Xia}, \bibinfo{person}{Jingyang Rui}, \bibinfo{person}{Jiechao Gao}, \bibinfo{person}{Hongyang Yang}, \bibinfo{person}{Ming Zhu}, \bibinfo{person}{Christina~Dan Wang}, \bibinfo{person}{Zhaoran Wang}, {and} \bibinfo{person}{Jian Guo}.} \bibinfo{year}{2022}\natexlab{}.
\newblock \showarticletitle{FinRL-Meta: Market Environments and Benchmarks for Data-Driven Financial Reinforcement Learning}. In \bibinfo{booktitle}{\emph{NeurIPS}}.
\newblock


\bibitem[\protect\citeauthoryear{Liu, Yang, Gao, and Wang}{Liu et~al\mbox{.}}{2021}]%
        {Liu2021FinRLDR}
\bibfield{author}{\bibinfo{person}{Xiao-Yang Liu}, \bibinfo{person}{Hongyang Yang}, \bibinfo{person}{Jiechao Gao}, {and} \bibinfo{person}{Chris Wang}.} \bibinfo{year}{2021}\natexlab{}.
\newblock \showarticletitle{FinRL: deep reinforcement learning framework to automate trading in quantitative finance}.
\newblock \bibinfo{journal}{\emph{Proceedings of the Second ACM International Conference on AI in Finance}} (\bibinfo{year}{2021}).
\newblock


\bibitem[\protect\citeauthoryear{Liu, Hu, Zhang, Wu, Wang, Ma, and Long}{Liu et~al\mbox{.}}{2024}]%
        {liuitransformer}
\bibfield{author}{\bibinfo{person}{Yong Liu}, \bibinfo{person}{Tengge Hu}, \bibinfo{person}{Haoran Zhang}, \bibinfo{person}{Haixu Wu}, \bibinfo{person}{Shiyu Wang}, \bibinfo{person}{Lintao Ma}, {and} \bibinfo{person}{Mingsheng Long}.} \bibinfo{year}{2024}\natexlab{}.
\newblock \showarticletitle{iTransformer: Inverted Transformers Are Effective for Time Series Forecasting}. In \bibinfo{booktitle}{\emph{The Twelfth International Conference on Learning Representations}}.
\newblock


\bibitem[\protect\citeauthoryear{Liu, Liu, Zhao, Pan, and Liu}{Liu et~al\mbox{.}}{2020}]%
        {Liu2020AdaptiveQT}
\bibfield{author}{\bibinfo{person}{Yang Liu}, \bibinfo{person}{Qi Liu}, \bibinfo{person}{Hongke Zhao}, \bibinfo{person}{Zhen Pan}, {and} \bibinfo{person}{Chuanren Liu}.} \bibinfo{year}{2020}\natexlab{}.
\newblock \showarticletitle{Adaptive Quantitative Trading: An Imitative Deep Reinforcement Learning Approach}. In \bibinfo{booktitle}{\emph{AAAI Conference on Artificial Intelligence}}.
\newblock


\bibitem[\protect\citeauthoryear{Ma, Mao, Lin, Wu, and Cambria}{Ma et~al\mbox{.}}{2022}]%
        {Ma2022MultisourceAC}
\bibfield{author}{\bibinfo{person}{Yu Ma}, \bibinfo{person}{Rui Mao}, \bibinfo{person}{Qika Lin}, \bibinfo{person}{Peng Wu}, {and} \bibinfo{person}{E. Cambria}.} \bibinfo{year}{2022}\natexlab{}.
\newblock \showarticletitle{Multi-source aggregated classification for stock price movement prediction}.
\newblock \bibinfo{journal}{\emph{Inf. Fusion}}  \bibinfo{volume}{91} (\bibinfo{year}{2022}), \bibinfo{pages}{515--528}.
\newblock


\bibitem[\protect\citeauthoryear{Markowitz}{Markowitz}{1952}]%
        {Markowitz1952PortfolioS}
\bibfield{author}{\bibinfo{person}{Harry~M. Markowitz}.} \bibinfo{year}{1952}\natexlab{}.
\newblock \showarticletitle{Portfolio Selection}.
\newblock \bibinfo{journal}{\emph{The Journal of Finance}}.
\newblock


\bibitem[\protect\citeauthoryear{Platt}{Platt}{1999}]%
        {Platt1999ProbabilisticOF}
\bibfield{author}{\bibinfo{person}{John Platt}.} \bibinfo{year}{1999}\natexlab{}.
\newblock \showarticletitle{Probabilistic Outputs for Support vector Machines and Comparisons to Regularized Likelihood Methods}.
\newblock


\bibitem[\protect\citeauthoryear{Rockafellar and Uryasev}{Rockafellar and Uryasev}{2000}]%
        {Rockafellar2000OptimizationOC}
\bibfield{author}{\bibinfo{person}{R.~Tyrrell Rockafellar} {and} \bibinfo{person}{Stanislav Uryasev}.} \bibinfo{year}{2000}\natexlab{}.
\newblock \showarticletitle{Optimization of conditional value-at risk}.
\newblock \bibinfo{journal}{\emph{Journal of Risk}} (\bibinfo{year}{2000}).
\newblock


\bibitem[\protect\citeauthoryear{Sharpe}{Sharpe}{1994}]%
        {Sharpe1994TheSR}
\bibfield{author}{\bibinfo{person}{William~F. Sharpe}.} \bibinfo{year}{1994}\natexlab{}.
\newblock \showarticletitle{The Sharpe Ratio}.
\newblock


\bibitem[\protect\citeauthoryear{Sun, He, Wang, Zhu, Li, and An}{Sun et~al\mbox{.}}{2021}]%
        {Sun2021DeepScalperAR}
\bibfield{author}{\bibinfo{person}{Shuo Sun}, \bibinfo{person}{Xu He}, \bibinfo{person}{R. Wang}, \bibinfo{person}{Junlei Zhu}, \bibinfo{person}{Jian Li}, {and} \bibinfo{person}{Bo An}.} \bibinfo{year}{2021}\natexlab{}.
\newblock \showarticletitle{DeepScalper: A Risk-Aware Reinforcement Learning Framework to Capture Fleeting Intraday Trading Opportunities}.
\newblock \bibinfo{journal}{\emph{Proceedings of the 31st ACM International Conference on Information \& Knowledge Management}} (\bibinfo{year}{2021}).
\newblock


\bibitem[\protect\citeauthoryear{Wang, Wang, Li, Zheng, Guan, and Chen}{Wang et~al\mbox{.}}{2022}]%
        {Wang2022AdaptiveLP}
\bibfield{author}{\bibinfo{person}{Heyuan Wang}, \bibinfo{person}{Tengjiao Wang}, \bibinfo{person}{Shun Li}, \bibinfo{person}{Jiayi Zheng}, \bibinfo{person}{Shijie Guan}, {and} \bibinfo{person}{Wei Chen}.} \bibinfo{year}{2022}\natexlab{}.
\newblock \showarticletitle{Adaptive Long-Short Pattern Transformer for Stock Investment Selection}. In \bibinfo{booktitle}{\emph{International Joint Conference on Artificial Intelligence}}.
\newblock


\bibitem[\protect\citeauthoryear{Wang, Zhang, Tang, Wu, and Xiong}{Wang et~al\mbox{.}}{2019}]%
        {Wang2019AlphaStockAB}
\bibfield{author}{\bibinfo{person}{Jingyuan Wang}, \bibinfo{person}{Yang Zhang}, \bibinfo{person}{Ke Tang}, \bibinfo{person}{Junjie Wu}, {and} \bibinfo{person}{Zhang Xiong}.} \bibinfo{year}{2019}\natexlab{}.
\newblock \showarticletitle{AlphaStock: A Buying-Winners-and-Selling-Losers Investment Strategy using Interpretable Deep Reinforcement Attention Networks}.
\newblock \bibinfo{journal}{\emph{SIGKDD}} (\bibinfo{year}{2019}).
\newblock


\bibitem[\protect\citeauthoryear{Wang, Wei, An, Feng, and Yao}{Wang et~al\mbox{.}}{2020}]%
        {Wang2020CommissionFI}
\bibfield{author}{\bibinfo{person}{R. Wang}, \bibinfo{person}{Hongxin Wei}, \bibinfo{person}{Bo An}, \bibinfo{person}{Zhouyan Feng}, {and} \bibinfo{person}{Jun Yao}.} \bibinfo{year}{2020}\natexlab{}.
\newblock \showarticletitle{Commission Fee is not Enough: A Hierarchical Reinforced Framework for Portfolio Management}. In \bibinfo{booktitle}{\emph{AAAI Conference on Artificial Intelligence}}.
\newblock


\bibitem[\protect\citeauthoryear{Wang, Deng, Wang, Lian, Yan, Yuan, Zhang, and Xiong}{Wang et~al\mbox{.}}{2024a}]%
        {tfwang-kdd-2024-comet}
\bibfield{author}{\bibinfo{person}{Tianfu Wang}, \bibinfo{person}{Liwei Deng}, \bibinfo{person}{Chao Wang}, \bibinfo{person}{Jianxun Lian}, \bibinfo{person}{Yue Yan}, \bibinfo{person}{Nicholas~Jing Yuan}, \bibinfo{person}{Qi Zhang}, {and} \bibinfo{person}{Hui Xiong}.} \bibinfo{year}{2024}\natexlab{a}.
\newblock \showarticletitle{COMET: NFT Price Prediction with Wallet Profiling}. In \bibinfo{booktitle}{\emph{Proceedings of the 30th ACM SIGKDD Conference on Knowledge Discovery and Data Mining}} \emph{(\bibinfo{series}{KDD '24})}. \bibinfo{pages}{5893–5904}.
\newblock


\bibitem[\protect\citeauthoryear{Wang, Shen, Fan, Xu, Liu, and Xiong}{Wang et~al\mbox{.}}{2024b}]%
        {tfwang-tsc-2023-hrl-acra}
\bibfield{author}{\bibinfo{person}{Tianfu Wang}, \bibinfo{person}{Li Shen}, \bibinfo{person}{Qilin Fan}, \bibinfo{person}{Tong Xu}, \bibinfo{person}{Tongliang Liu}, {and} \bibinfo{person}{Hui Xiong}.} \bibinfo{year}{2024}\natexlab{b}.
\newblock \showarticletitle{Joint Admission Control and Resource Allocation of Virtual Network Embedding Via Hierarchical Deep Reinforcement Learning}.
\newblock \bibinfo{journal}{\emph{IEEE Transactions on Services Computing}} \bibinfo{volume}{17}, \bibinfo{number}{03} (\bibinfo{year}{2024}), \bibinfo{pages}{1001--1015}.
\newblock


\bibitem[\protect\citeauthoryear{Wang, Huang, Tu, Zhang, and Xu}{Wang et~al\mbox{.}}{2021}]%
        {Wang2021DeepTraderAD}
\bibfield{author}{\bibinfo{person}{Zhicheng Wang}, \bibinfo{person}{Biwei Huang}, \bibinfo{person}{Shikui Tu}, \bibinfo{person}{Kun Zhang}, {and} \bibinfo{person}{Lei Xu}.} \bibinfo{year}{2021}\natexlab{}.
\newblock \showarticletitle{DeepTrader: A Deep Reinforcement Learning Approach for Risk-Return Balanced Portfolio Management with Market Conditions Embedding}. In \bibinfo{booktitle}{\emph{AAAI Conference on Artificial Intelligence}}.
\newblock


\bibitem[\protect\citeauthoryear{Wu, Xu, Wang, and Long}{Wu et~al\mbox{.}}{2021b}]%
        {wu2021autoformer}
\bibfield{author}{\bibinfo{person}{Haixu Wu}, \bibinfo{person}{Jiehui Xu}, \bibinfo{person}{Jianmin Wang}, {and} \bibinfo{person}{Mingsheng Long}.} \bibinfo{year}{2021}\natexlab{b}.
\newblock \showarticletitle{Autoformer: Decomposition transformers with auto-correlation for long-term series forecasting}.
\newblock \bibinfo{journal}{\emph{Advances in neural information processing systems}}  \bibinfo{volume}{34} (\bibinfo{year}{2021}), \bibinfo{pages}{22419--22430}.
\newblock


\bibitem[\protect\citeauthoryear{Wu, Syu, Lin, and Ho}{Wu et~al\mbox{.}}{2021a}]%
        {Wu2021PortfolioMS}
\bibfield{author}{\bibinfo{person}{MuEn Wu}, \bibinfo{person}{JiaHao Syu}, \bibinfo{person}{Jerry~Chunwei Lin}, {and} \bibinfo{person}{JanMing Ho}.} \bibinfo{year}{2021}\natexlab{a}.
\newblock \showarticletitle{Portfolio management system in equity market neutral using reinforcement learning}.
\newblock \bibinfo{journal}{\emph{Applied Intelligence}}  \bibinfo{volume}{51}, \bibinfo{pages}{8119 -- 8131}.
\newblock


\bibitem[\protect\citeauthoryear{Xu, Zhang, Ye, Zhao, and Tan}{Xu et~al\mbox{.}}{2020}]%
        {Xu2020RelationAwareTF}
\bibfield{author}{\bibinfo{person}{Ke Xu}, \bibinfo{person}{Yifan Zhang}, \bibinfo{person}{Deheng Ye}, \bibinfo{person}{Peilin Zhao}, {and} \bibinfo{person}{Mingkui Tan}.} \bibinfo{year}{2020}\natexlab{}.
\newblock \showarticletitle{Relation-Aware Transformer for Portfolio Policy Learning}. In \bibinfo{booktitle}{\emph{International Joint Conference on Artificial Intelligence}}.
\newblock


\bibitem[\protect\citeauthoryear{Yang, Zheng, Liang, Han, and Zhu}{Yang et~al\mbox{.}}{2022}]%
        {Yang2022AST}
\bibfield{author}{\bibinfo{person}{Mengyuan Yang}, \bibinfo{person}{Xiaolin Zheng}, \bibinfo{person}{Qianqiao Liang}, \bibinfo{person}{Bing Han}, {and} \bibinfo{person}{Mengying Zhu}.} \bibinfo{year}{2022}\natexlab{}.
\newblock \showarticletitle{A Smart Trader for Portfolio Management based on Normalizing Flows}. In \bibinfo{booktitle}{\emph{International Joint Conference on Artificial Intelligence}}.
\newblock


\bibitem[\protect\citeauthoryear{Ye, Pei, Wang, Chen, Zhu, Xiao, and Li}{Ye et~al\mbox{.}}{2020}]%
        {Ye2020ReinforcementLearningBP}
\bibfield{author}{\bibinfo{person}{Yunan Ye}, \bibinfo{person}{Hengzhi Pei}, \bibinfo{person}{Boxin Wang}, \bibinfo{person}{PinYu Chen}, \bibinfo{person}{Yada Zhu}, \bibinfo{person}{Jun Xiao}, {and} \bibinfo{person}{Bo Li}.} \bibinfo{year}{2020}\natexlab{}.
\newblock \showarticletitle{Reinforcement-Learning based Portfolio Management with Augmented Asset Movement Prediction States}.
\newblock \bibinfo{journal}{\emph{ArXiv}}.
\newblock


\bibitem[\protect\citeauthoryear{Zhang, Zhang, Cucuringu, and Zohren}{Zhang et~al\mbox{.}}{2021}]%
        {Zhang2021AUE}
\bibfield{author}{\bibinfo{person}{Chao Zhang}, \bibinfo{person}{Zihao Zhang}, \bibinfo{person}{Mihai Cucuringu}, {and} \bibinfo{person}{Stefan Zohren}.} \bibinfo{year}{2021}\natexlab{}.
\newblock \showarticletitle{A Universal End-to-End Approach to Portfolio Optimization via Deep Learning}.
\newblock


\bibitem[\protect\citeauthoryear{Zhang, Zohren, and Roberts}{Zhang et~al\mbox{.}}{2020}]%
        {Zhang2020DeepLF}
\bibfield{author}{\bibinfo{person}{Zihao Zhang}, \bibinfo{person}{Stefan Zohren}, {and} \bibinfo{person}{Stephen~J. Roberts}.} \bibinfo{year}{2020}\natexlab{}.
\newblock \showarticletitle{Deep Learning for Portfolio Optimization}. In \bibinfo{booktitle}{\emph{The Journal of Financial Data Science}}.
\newblock


\bibitem[\protect\citeauthoryear{Zheng, Shao, Zhu, and Shen}{Zheng et~al\mbox{.}}{2023}]%
        {Zheng2023RelationalTG}
\bibfield{author}{\bibinfo{person}{Zetao Zheng}, \bibinfo{person}{Jie Shao}, \bibinfo{person}{Jia Zhu}, {and} \bibinfo{person}{Heng~Tao Shen}.} \bibinfo{year}{2023}\natexlab{}.
\newblock \showarticletitle{Relational Temporal Graph Convolutional Networks for Ranking-Based Stock Prediction}.
\newblock \bibinfo{journal}{\emph{2023 IEEE 39th International Conference on Data Engineering (ICDE)}} (\bibinfo{year}{2023}), \bibinfo{pages}{123--136}.
\newblock


\bibitem[\protect\citeauthoryear{Zhou, Zheng, Zhu, Li, and He}{Zhou et~al\mbox{.}}{2020}]%
        {Zhou2020DomainAM}
\bibfield{author}{\bibinfo{person}{Dawei Zhou}, \bibinfo{person}{Lecheng Zheng}, \bibinfo{person}{Yada Zhu}, \bibinfo{person}{Jianbo Li}, {and} \bibinfo{person}{Jingrui He}.} \bibinfo{year}{2020}\natexlab{}.
\newblock \showarticletitle{Domain Adaptive Multi-Modality Neural Attention Network for Financial Forecasting}.
\newblock \bibinfo{journal}{\emph{Proceedings of The Web Conference 2020}} (\bibinfo{year}{2020}).
\newblock


\bibitem[\protect\citeauthoryear{Zhou, Zhang, Peng, Zhang, Li, Xiong, and Zhang}{Zhou et~al\mbox{.}}{2021}]%
        {zhou2021informer}
\bibfield{author}{\bibinfo{person}{Haoyi Zhou}, \bibinfo{person}{Shanghang Zhang}, \bibinfo{person}{Jieqi Peng}, \bibinfo{person}{Shuai Zhang}, \bibinfo{person}{Jianxin Li}, \bibinfo{person}{Hui Xiong}, {and} \bibinfo{person}{Wancai Zhang}.} \bibinfo{year}{2021}\natexlab{}.
\newblock \showarticletitle{Informer: Beyond efficient transformer for long sequence time-series forecasting}. In \bibinfo{booktitle}{\emph{Proceedings of the AAAI conference on artificial intelligence}}, Vol.~\bibinfo{volume}{35}. \bibinfo{pages}{11106--11115}.
\newblock


\bibitem[\protect\citeauthoryear{Zhu, Wang, Wu, Yang, Chen, Liang, and Zheng}{Zhu et~al\mbox{.}}{2022}]%
        {Zhu2022WISEWB}
\bibfield{author}{\bibinfo{person}{Mengying Zhu}, \bibinfo{person}{Yan Wang}, \bibinfo{person}{Fei Wu}, \bibinfo{person}{Mengyuan Yang}, \bibinfo{person}{Cheng Chen}, \bibinfo{person}{Qianqiao Liang}, {and} \bibinfo{person}{Xiaolin Zheng}.} \bibinfo{year}{2022}\natexlab{}.
\newblock \showarticletitle{WISE: Wavelet based Interpretable Stock Embedding for Risk-Averse Portfolio Management}.
\newblock \bibinfo{journal}{\emph{Companion Proceedings of the Web Conference 2022}} (\bibinfo{year}{2022}).
\newblock


\bibitem[\protect\citeauthoryear{Zivot}{Zivot}{2017}]%
        {ericzivot}
\bibfield{author}{\bibinfo{person}{Eric Zivot}.} \bibinfo{year}{2017}\natexlab{}.
\newblock \bibinfo{booktitle}{\emph{Introduction to Computational Finance and Financial Econometrics}}.
\newblock \bibinfo{publisher}{Chapman \& Hall Crs}.
\newblock


\end{thebibliography}

\end{document}